\documentclass[10pt, twocolumn, comsoc]{IEEEtran}

\usepackage{graphicx,epsfig}
\usepackage[noadjust]{cite}
\usepackage{mcite}
\usepackage{amsfonts,helvet}
\usepackage{fancyhdr}
\usepackage{threeparttable}
\usepackage{epsf,epsfig}
\usepackage{amsthm}
\usepackage{amsmath}
\usepackage{siunitx}
\usepackage{amssymb}
\usepackage{dsfont}
\usepackage{subfigure}
\usepackage{color}
\usepackage[linesnumbered,ruled]{algorithm2e}
\usepackage{enumerate}
\usepackage{cancel}
\usepackage{comment}
\usepackage{siunitx}

\usepackage[colorlinks=true, linkcolor=blue]{hyperref}

\newcommand{\fig}[1]{Fig.\ \ref{#1}}
\newtheorem{theorem}{Theorem}

\newtheorem{corollary}{Corollary}

\newtheorem{lemma}{Lemma}

\newtheorem{remark}{Remark}

\SetKwInput{KwInput}{Initialize}
\SetKwInput{KwOutput}{Output}
\SetKwProg{Fn}{Stage 1}{}{}
\SetKwProg{Fn}{Stage 2}{}{}

\usepackage{eucal}

\begin{document}

\title{Space-Time Adaptive Beamforming for Satellite Communications: Harnessing Doppler as New Signaling Dimensions}

\author{Hyeongtak~Yun, Seyong~Kim, and Jeonghun~Park

\thanks{
This work was supported by Samsung Research Funding \& Incubation Center of Samsung Electronics under Project Number SRFC-IT2402-06.
H. Yun, S. Kim, and J. Park are with the School of Electrical and Electronic Engineering, Yonsei University, Seoul 03722, South Korea (e-mail: {\texttt{yht3114@yonsei.ac.kr; sykim@yonsei.ac.kr; jhpark@yonsei.ac.kr}}).
}}

\maketitle \setcounter{page}{1}

\begin{abstract}
Low Earth orbit (LEO) satellite downlinks are fundamentally limited by severe channel correlation: the line-of-sight (LoS)-dominant propagation and high orbital altitude confine users to a narrow angular region, rendering the multiuser channel matrix ill-conditioned. 
This paper provides a rigorous characterization of this limitation by exploiting the Vandermonde structure of the channel. Specifically, we link the minimum eigenvalue of the channel Gram matrix to user crowding through a balls-and-bins abstraction, and derive asymptotic sum rate scaling laws for both uniform linear arrays and uniform planar arrays. Our analysis reveals a sharp density threshold beyond which zero-forcing (ZF) precoding provably fails. To overcome this spatial multiplexing breakdown, we propose space-time adaptive beamforming (STAB), which exploits user-dependent residual Doppler shifts as an additional discrimination dimension. By constructing a time-extended channel in the joint space-Doppler domain, STAB restores a non-vanishing sum rate in regimes where purely spatial ZF collapses, {with a matching achievability bound.} We further develop a space-Doppler user selection (SDS) algorithm that leverages both spatial and Doppler separability for scheduling. Numerical results corroborate the analytical predictions and demonstrate that the combination of STAB and SDS achieves substantial sum rate gains over conventional methods in realistic dense LEO downlink scenarios.
\end{abstract}

\begin{IEEEkeywords}
Low earth orbit satellites, MIMO, space-time adaptive processing, satellite communications, Vandermonde matrix
\end{IEEEkeywords}

\section{Introduction}
\label{sec:introduction}
Low Earth orbit (LEO) satellite communications are increasingly regarded as a viable complement to terrestrial networks for extending broadband connectivity to underserved regions. Driven by significant reductions in launch costs and advances in small-satellite platforms, commercial mega-constellations such as Starlink, OneWeb, and Kuiper are being deployed at scales of thousands of satellites, targeting near-global coverage. While initial LEO deployments have been primarily motivated by coverage expansion, the next phase of system evolution is shaped by capacity and throughput requirements \cite{satelliteprecoding:wcm:16}. Under limited onboard spectrum and power budgets, improving spectral efficiency in the satellite downlink has emerged as a central design objective \cite{slim:survey:24}. In particular, serving a large number of users over the same time-frequency resources demands spatial multiplexing capabilities that go beyond conventional orthogonal multiple-access strategies. This motivates the adoption of multiuser multiple-input multiple-output (MU-MIMO) precoding techniques on the satellite downlink, which can substantially enhance system throughput by exploiting the spatial domain to simultaneously serve multiple co-channel users \cite{chris:twc:15, zheng:twc:2012, vazquez:jsac:18, seyong:twc:25}.
\begin{figure}[!t]
    \centerline{\resizebox{0.5\columnwidth}{!}{\includegraphics{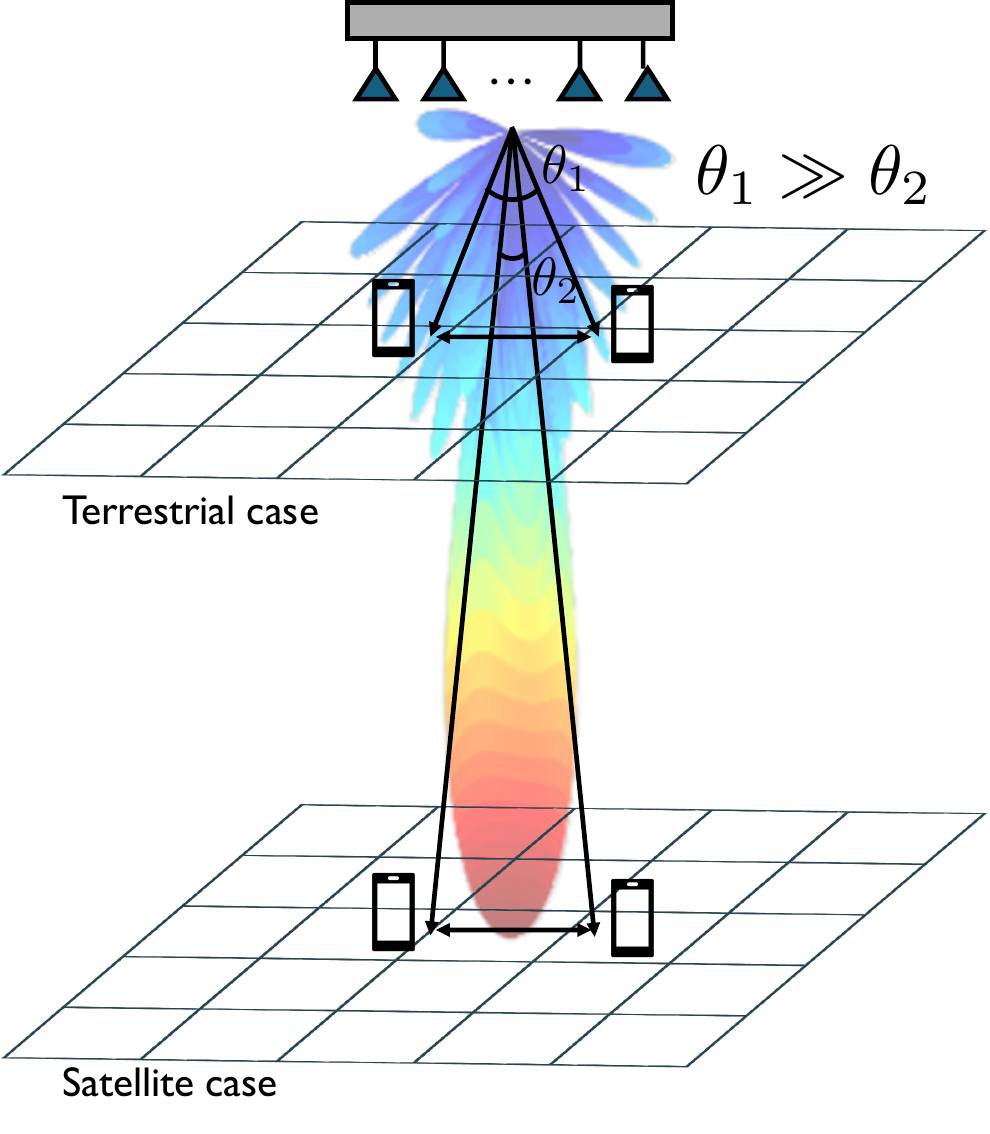}}}
    \caption{Large orbital altitudes inducing user channel correlation and MU-MIMO limitations.}
    \label{fig:mu_mimo_limit}
\end{figure}

Despite the potential of MU-MIMO, its effectiveness in satellite downlinks is fundamentally limited by the propagation environment \cite{angeletti:access:20}. Unlike terrestrial cellular systems, where rich multipath scattering provides high-rank channel matrices that facilitate spatial multiplexing, satellite user links are dominated by line-of-sight (LoS) propagation with very limited scattering \cite{you:jsac:20}. In this regime, the spatial separability of users is governed primarily by the angular separation as seen from the satellite array. Because LEO satellites operate at orbital altitudes on the order of hundreds of kilometers---far exceeding typical terrestrial cell radii---users within the same coverage area subtend only a very narrow range of angles at the satellite, resulting in highly correlated channel vectors \cite{jsdm:tit:13} (as depicted in Fig.~\ref{fig:mu_mimo_limit}). Consequently, the multiuser channel matrix tends to be severely ill-conditioned, and the number of users that can be effectively multiplexed over the same time-frequency resource is much smaller than what the antenna count would nominally permit.

Nonetheless, this fundamental limitation is not yet well understood from an analytical perspective. Most existing works on satellite MU-MIMO have assessed precoding performance through numerical simulations \cite{zheng:twc:2012, you:jsac:20, chris:twc:15}, rather than leveraging the mathematical structure of LoS-dominant channels. In particular, when users are served via linear precoding such as zero-forcing (ZF), the multiuser channel matrix exhibits a Vandermonde-like structure whose conditioning is intimately tied to the spatial geometry of the users. Yet, a rigorous characterization of how this structure governs the achievable MU-MIMO performance is still lacking in the existing literature, especially in the regime where the angular separations among users become small relative to the array resolution. This gap makes it difficult to establish analytical performance limits and principled design guidelines for LEO MU-MIMO downlink systems.

In this paper, we address these challenges through two main contributions. First, we perform a rigorous performance analysis of ZF precoding for the satellite downlink MU-MIMO system by exploiting the Vandermonde structure of the LoS channel matrix. To be specific, we characterize the achievable multiplexing performance as a function of cell size, user count, and antenna configuration in the asymptotic antenna regime. Second, motivated by the fundamental limitations revealed by our analysis, we propose a space-time adaptive beamforming (STAB) framework, inspired by classical space-time adaptive processing (STAP) in airborne radar \cite{melvin:maes:04,wicks:msp:06}. STAB exploits the fact that closely spaced users, despite having nearly identical spatial signatures, exhibit distinct Doppler shifts arising from their individual velocity vectors. By doing this, the proposed STAB provides a simple yet powerful additional dimension for multiuser separation beyond what purely spatial precoding can offer.

\subsection{Related Works}
\label{subsec:related_works}
Recently, satellite communications have increasingly adopted aggressive full frequency reuse, spurring extensive research on precoding techniques for inter-beam interference management in multibeam systems \cite{satelliteprecoding:wcm:16, slim:survey:24, mysore:commlett:21}. In this context, multicast multigroup precoding with user scheduling \cite{chris:twc:15} and sum rate maximization under quality-of-service constraints \cite{li:tcom:23} have been studied. Under practical channel state information (CSI) limitations in satellite links, ZF precoding with partial CSI \cite{ahmad:tvt:21}, precoding under outdated CSI \cite{vazquez:jsac:18}, and LEO downlink precoders exploiting statistical CSI \cite{you:jsac:20} have been explored, while \cite{angeletti:access:20} proposed a pragmatic massive MIMO architecture for broadband satellites. Recently, distributed precoding for satellite-terrestrial integrated networks \cite{doseon:twc:25} and asymptotic performance analyses of multibeam massive MIMO systems \cite{seyong:wcl:25, seyong:twc:25} have further advanced the field.

As explained earlier, a key feature of the satellite channel is the dominance of LoS propagation combined with the difficulty of achieving sufficient angular separation among users, which makes user channel vectors highly correlated and limits spatial multiplexing. This issue has been extensively studied in the massive MIMO literature, including favorable propagation analysis \cite{ngo:eusipco:14}, max-min power control under LoS correlation \cite{yang:tcom:17}, and measurement-based evidence of limited spatial separability \cite{flordelis:access:18}. These observations have motivated correlation-aware user selection \cite{chaves:ojcoms:22} and spatial-domain schemes tailored to LEO, such as space-angle user grouping \cite{you:jsac:20}, planar array design exploiting channel geometry \cite{kexinli:tcom:22}, and graph-based user clustering for LEO MU-MIMO \cite{riviello:asmsspsc:22, ahmad:eucnc:23}. However, these studies mainly focus on practical algorithms rather than providing a structural understanding of why spatial-domain MU-MIMO fails in dense LEO geometries. It also remains analytically unclear under what conditions spatial multiplexing fundamentally breaks down and how this limit scales with key system parameters.

It is well established that under LoS propagation, the MU-MIMO channel matrix exhibits a Vandermonde structure, making the spectral analysis of Vandermonde matrices essential to characterizing LEO satellite MU-MIMO performance. The spectral properties of such matrices have been studied from two complementary perspectives: random matrix theory, including asymptotic moment analysis \cite{ryan:tit:09} and eigenvalue distribution bounds \cite{tucci:tit:11, tucci:jtp:14}; and super-resolution theory, where sharp phase transitions in the condition number \cite{moitra:stoc:15} and tight bounds linking the smallest singular value to cluster size \cite{batenkov:siam:20, li:stable:2021} have been established. These tools have also been applied to classical subspace methods \cite{li:tit:20, liu:tit:21}. Despite this extensive foundation, a formal bridge between these spectral characterizations and the fundamental limits of LEO MU-MIMO systems has yet to be established.
While existing satellite MU-MIMO studies have predominantly relied on spatial degrees of freedom, the radar domain offers a fundamentally different approach: airborne STAP separates targets sharing identical spatial angles via slow-time Doppler filtering \cite{melvin:maes:04, wicks:msp:06}, and SAR synthesizes virtual apertures through platform motion \cite{moreira:mgrs:13}, both demonstrating that temporal evolution can provide additional discrimination when spatial resolution is insufficient. Recently, \cite{yim:twc:26} proposed a STAP-inspired space-time beamforming framework for multi-LEO satellite systems exploiting angle of arrival (AoA) and relative Doppler to synthesize a larger virtual aperture. Nevertheless, this work focuses on precoder design for interference mitigation without providing a fundamental analysis of the performance collapse caused by spatial separability loss.

\subsection{Contributions}
\label{subsec:contributions}
The main contributions of this paper are summarized as follows:
\begin{itemize}
    \item By exploiting the Vandermonde structure of LoS-dominant LEO satellite channels, we model geometric user crowding through a balls-and-bins abstraction. This allows us to derive asymptotic sum rate scaling laws and characterize the user density threshold beyond which purely spatial multiplexing fundamentally breaks down.
    \item To overcome the aforementioned spatial multiplexing limits, we propose STAB, a framework leveraging user-dependent residual Doppler shifts as an extra signaling dimension. By repeating symbol transmissions, STAB expands the conventional spatial channel into a joint space-Doppler domain to enhance user separability.
    \item We extend the asymptotic scaling analysis to STAB, proving its ability to maintain a non-vanishing sum rate where spatial ZF fails. Inspired by these insights, we further develop a joint space-Doppler user selection (SDS) algorithm that achieves substantial sum rate improvements over conventional methods in {realistic LEO downlink systems}.
\end{itemize}

\subsection{Notation}
\label{subsec:notation}
Scalars, vectors, and matrices are denoted by $a$, $\mathbf{a}$, and $\mathbf{A}$, respectively. $\mathbb{C}^{m\times n}$ is the set of $m\times n$ complex matrices, and $\mathbb{N}$ is the set of positive integers. $(\cdot)^{\mathsf{T}}$, $(\cdot)^{\mathsf{H}}$, and $(\cdot)^{-1}$ denote transpose, Hermitian transpose, and inverse, respectively. $\operatorname{tr}(\cdot)$ denotes the trace, $\lambda_1(\mathbf{A})$ the minimum eigenvalue of a Hermitian matrix $\mathbf{A}$, and $|\cdot|$ the magnitude or set measure, as appropriate. $\lfloor\cdot\rfloor$ and $\lceil\cdot\rceil$ denote the floor and ceiling. $\mathbb{P}(\cdot)$, $\mathbb{E}[\cdot]$, and $\mathds{1}\{\cdot\}$ denote probability, expectation, and the indicator function, respectively. $\mathbf{I}_N$ is the $N\times N$ identity matrix. $\|\cdot\|_2$ denotes the Euclidean norm for vectors and the spectral norm for matrices, $\|\cdot\|_1$ and $\|\cdot\|_\infty$ the standard vector norms or induced matrix norms, as appropriate, and $\|\cdot\|_F$ the Frobenius norm. $\otimes$ denotes the Kronecker product. $\mathcal{CN}(\cdot,\cdot)$ denotes the circularly symmetric complex Gaussian distribution. For nonnegative $f$ and $g$, $f=O(g)$ means that there exist constants $C>0$ and $M_0$ such that $f\le Cg$ for all $M\ge M_0$, $f=\Omega(g)$ means that $g=O(f)$, $f=\Theta(g)$ means that both hold, and $f=o(g)$ means that $f/g\to0$ as $M\to\infty$. We write $f\lesssim g$ when $f\le Cg$ holds for a constant $C>0$ independent of $M$.

\section{System Model}
\label{sec:system_model}

\subsection{System Setup}
\label{subsec:system_setup}
We consider LEO satellite downlink systems equipped with phased-array antennas, which enable rapid electronic beam steering. Phased-array antennas have become the baseline architecture in recent LEO communications literature owing to advances in hardware and digital beamforming \cite{heo:sate:2023, you:hybrid:2022, you:massive:2022, kexinli:tcom:22}. A satellite at an altitude $H$ serves $K$ single-antenna ground users. The satellite is equipped with a uniform planar array (UPA) of $M = M_x M_y$ antenna elements on the $xy$-plane, with half-wavelength spacing. The $K$ users are independently and identically distributed (i.i.d.) uniformly over a square service area $\CMcal{A}=[-R,R]\times[-R,R]$, where $R$ is the cell size and $(x_k,y_k)$ denotes the horizontal coordinates of user $k$.

We assume the satellite possesses full CSI. This assumption is well-justified in the LoS-dominant regime, where the channel is largely determined by the geometric relationship between the satellite and each user. Since user terminals in 3GPP non-terrestrial networks (NTNs) are required to be equipped with Global Navigation Satellite System (GNSS) receivers for timing and frequency pre-compensation \cite{3gpp_tr38821}, accurate position and velocity information is readily available and can be reported to the satellite with minimal overhead.

\subsection{LEO Downlink Channel Model}
\label{subsec:channel_model}
This section describes the LEO downlink channel model. For UPA, the steering vector $\mathbf{a}_k \in \mathbb{C}^{M\times 1}$ is defined as
\begin{align}
\mathbf{a}_k = \mathbf{a}(u_{y, k}; M_y) \otimes \mathbf{a}(u_{x, k}; M_x),
\label{eq:upa_a_theta_phi}
\end{align}
where $u_{x, k} = \frac{\sin\theta_k \cos\phi_k}{2}$ and $u_{y, k} = \frac{\sin\theta_k \sin\phi_k}{2}$ are the spatial frequencies, with $\theta_k$ and $\phi_k$ being the zenith and azimuth angles of user $k$ relative to the satellite. The response vector $\mathbf{a}(x; M) \in \mathbb{C}^{M}$ is given by
\begin{align}
\mathbf{a}(x;M) = \frac{1}{\sqrt{M}} \left[1,\ e^{j2\pi x},\ \ldots,\ e^{j2\pi (M-1)x}\right]^{\mathsf T}.
\label{eq:array_response_vector}
\end{align}

The LEO downlink channel for user $k$ is a superposition of $P_k$ multipath components:
\begin{align}
\mathbf{h}_k(t,f) = \sqrt{M} \sum_{p=0}^{P_k-1} \beta_{k,p} e^{j2\pi (t \nu_{k,p} - f \tau_{k,p})} \mathbf{a}_k,
\label{eq:leo_general_tf_theta_phi}
\end{align}
where $\beta_{k,p}$, $\tau_{k,p}$, and $\nu_{k,p}$ denote the complex gain, delay, and Doppler shift of path $p$ for user $k$, respectively.

Due to the large propagation distance and localized scatterers near the user, the satellite-induced Doppler $\nu_k^{\mathrm{sat}}$ and minimum delay $\tau_{k,0}$ are common across all paths. The channel can thus be factorized as
\begin{align}
\mathbf{h}_k(t,f) = e^{j2\pi\left(t\nu_k^{\mathrm{sat}}-f\tau_{k,0}\right)} \, \beta_k(t,f)\, \sqrt{M}\,\mathbf{a}_k,
\label{eq:leo_factored_tf_theta_phi}
\end{align}
where $\beta_k(t,f)$ is a Rician fading coefficient with Rician factor $\kappa_k$. Here, $\nu_{k,p}^{\mathrm{ut}} = \nu_{k,p} - \nu_k^{\mathrm{sat}}$ and $\tau_{k,p}^{\mathrm{ut}} = \tau_{k,p} - \tau_{k,0}$ represent the user-side relative Doppler and normalized delay. After compensating for the common phase term via synchronization, the effective baseband channel reduces to
\begin{align}
\mathbf{h}_k(t,f) = \sqrt{M}\,\beta_k(t,f)\,\mathbf{a}_k.
\label{eq:leo_baseband_theta_phi}
\end{align}

We adopt a single-path model for \eqref{eq:leo_baseband_theta_phi}, motivated by the LoS-dominant nature of LEO propagation, particularly with high elevation angles and directive beamforming. In such scenarios, the Rician factor $\kappa_k$ is typically large enough to render scattered components negligible. Retaining only the dominant LoS component, i.e., $P_k = 1$ for all $k$, the channel simplifies to
\begin{align}
\mathbf{h}_k(t) = \sqrt{M}\, \beta_k e^{j2\pi \nu_k^{\mathrm{ut}} t} \mathbf{a}_k,
\label{eq:single_path_channel}
\end{align}
where $\beta_k = \beta_{k,0}$ is the complex gain of the LoS path.

\subsection{LEO Downlink Signal Model}
\label{subsec:signal_model}
In this section, we present the multiuser LEO downlink signal model. Although user mobility introduces a residual Doppler shift $\nu_k^{\mathrm{ut}}$, we follow the quasi-static assumption common in satellite communications, where this effect is negligible during a single transmission interval. Accordingly, we initially set $\nu_k^{\mathrm{ut}} = 0$; the impact of Doppler variations is addressed in Section~\ref{sec:stab}.

The satellite serves $K$ users simultaneously using a precoded signal vector $\mathbf{x} \in \mathbb{C}^M$:
\begin{align}
\mathbf{x} = \sqrt{P}\sum_{k=1}^{K} \mathbf{f}_k s_k = \sqrt{P}\mathbf{F} \mathbf{s},
\label{eq:tx_signal}
\end{align}
where $P$ is the transmit power, $\mathbf{s} = [s_1, \dots, s_K]^{\mathsf{T}} \in \mathbb{C}^{K}$ is the data symbol vector with $\mathbf{s} \sim \mathcal{CN}(\mathbf{0}, \mathbf{I}_K)$, and $\mathbf{F} = [\mathbf{f}_1, \dots, \mathbf{f}_K] \in \mathbb{C}^{M \times K}$ is the precoding matrix satisfying $\|\mathbf{F}\|_F^2 = 1$. The received signal $y_k$ at user $k$ is
\begin{align}
y_k &= \mathbf{h}_k^{\mathsf{H}} \mathbf{x} + z_k = \sqrt{P}\,\mathbf{h}_k^{\mathsf{H}} \mathbf{f}_k s_k + \sqrt{P}\sum_{i \neq k} \mathbf{h}_k^{\mathsf{H}} \mathbf{f}_i s_i + z_k,
\label{eq:received_signal_k}
\end{align}
where $\mathbf{h}_k$ is the baseband channel from \eqref{eq:single_path_channel} with $\nu_k^{\mathrm{ut}} = 0$, and $z_k \sim \mathcal{CN}(0, \sigma^2)$ is additive white Gaussian noise (AWGN). In \eqref{eq:received_signal_k}, the first term is the desired signal, and the second represents inter-user interference (IUI). The received signal vector $\mathbf{y} = [y_1, \dots, y_K]^{\mathsf{T}} \in \mathbb{C}^K$ is expressed as
\begin{align}
\mathbf{y} = \sqrt{P}\,\mathbf{H}^{\mathsf{H}} \mathbf{F} \mathbf{s} + \mathbf{z},
\label{eq:received_signal_matrix}
\end{align}
where $\mathbf{H} = [\mathbf{h}_1, \dots, \mathbf{h}_K] \in \mathbb{C}^{M \times K}$ is the channel matrix, $\mathbf{z} = [z_1, \dots, z_K]^{\mathsf{T}}$ is the noise vector.

\subsection{Sum Rate Characterization with ZF}
\label{subsec:zf_characterization}
We adopt ZF precoding to mitigate IUI. The precoding matrix is $\mathbf{F} = \eta \mathbf{H}(\mathbf{H}^{\mathsf{H}}\mathbf{H})^{-1}$, where $\eta = \sqrt{1 / \text{tr}((\mathbf{H}^{\mathsf{H}}\mathbf{H})^{-1})}$ is the power normalization factor. Under the ZF criterion, IUI is perfectly nullified, and the signal-to-interference-plus-noise ratio (SINR) for user $k$ is
\begin{align}
\mathrm{SINR}_{k}
= \frac{P}{\sigma^2 \text{tr} \left(\left({\mathbf{H}}^{\mathsf H}{\mathbf{H}}\right)^{-1}\right)}
= \frac{\rho M}{{\text{tr}({\mathbf{G}}^{-1})}},
\label{eq:zf_sinr}
\end{align}
where $\rho = P/\sigma^2$ is the transmit signal-to-noise ratio (SNR) and ${\mathbf{G}} = \frac{{\mathbf{H}}^{\mathsf H}{\mathbf{H}}}{M}$ is the channel Gram matrix. We evaluate system performance using the spatially average sum rate. For users i.i.d. uniform within $\CMcal{A}$, the average sum rate is expressed as
\begin{align}
\mathbb{E}[R_{\Sigma}]
= K\,\mathbb{E} \Bigg[ \log_{2} \left(1 + \frac{\rho M}{{\text{tr}({\mathbf{G}}^{-1})}}\right)\Bigg],
\label{eq:average_sum_rate}
\end{align}
where the expectation is taken over random user locations.

To gain insight into how user crowding degrades ZF performance in LoS channels, consider a two-user case with $\mathbf{h}_k = \sqrt{M}\mathbf{a}_k$. In this scenario, the channel Gram matrix simplifies to
\begin{align}
\mathbf{G} = \begin{bmatrix} 1 & g \\ g^* & 1 \end{bmatrix},
\label{eq:two_user_gram}
\end{align}
where $g = \mathbf{a}_1^{\mathsf{H}}\mathbf{a}_2$ is the spatial correlation. For $K=2$, $\mathrm{tr}(\mathbf{G}^{-1}) = 2/(1-|g|^2)$, yielding the instantaneous sum rate
\begin{align}
R_\Sigma &= 2\log_2\left(1 + \frac{\rho M}{2}(1-|g|^2)\right). \label{eq:sumrate_2user}
\end{align}
For UPA steering vectors, the correlation magnitude factorizes as $|g| = |g_x||g_y|$ via Dirichlet kernels:
\begin{align}
|g_x| = \left|\frac{\sin(\pi M_x \Delta u_{x})}{M_x\sin(\pi \Delta u_{x})}\right|, \quad |g_y| = \left|\frac{\sin(\pi M_y \Delta u_{y})}{M_y\sin(\pi \Delta u_{y})}\right|,
\label{eq:dirichlet_kernels}
\end{align}
with $\Delta u_{x} = u_{x,1} - u_{x,2}$ and $\Delta u_{y} = u_{y,1} - u_{y,2}$. As users become angularly close, $\Delta u_{x}$ and $\Delta u_{y}$ approach zero, driving $|g|$ toward 1. Consequently, \eqref{eq:sumrate_2user} implies that both the SINR and sum rate collapse toward zero.

\section{Space-Time Adaptive Beamforming}
\label{sec:stab}
This section introduces STAB, which exploits the user-induced residual Doppler shift as an additional degree of freedom for user separation, relaxing the quasi-static assumption ($\nu_k^{\mathrm{ut}}=0$) of Section~\ref{sec:system_model}. By transmitting the same symbol vector over $L$ consecutive snapshots spaced by interval $T_r$, each user experiences a distinct phase shift across slow-time. This repeated transmission yields a time-extended channel representation, allowing the satellite to jointly discriminate users through their spatial and residual Doppler-induced temporal signatures.

\subsection{Space-Time Channel and Signal model}
\label{subsec:stab_channel_signal}
We first discretize the channel with a sampling interval $T_r$. For snapshot $\ell \in \{0,1,\dots,L-1\}$, the sampled channel vector of user $k$ is given by
\begin{align}
\mathbf{h}_k[\ell]
= \mathbf{h}_k(\ell T_r)
= \sqrt{M}\beta_ke^{j2\pi \ell \omega_k}\mathbf{a}_k,
\label{eq:stab_snapshot_channel}
\end{align}
where $\omega_k=\nu_k^{\mathrm{ut}}T_r$ denotes the normalized user-induced residual Doppler frequency.
In the radar system \cite{melvin:maes:04,wicks:msp:06}, target motion is approximated as a constant velocity during a coherent processing interval, allowing the target response to be factored into a spatial component and a slow-time Doppler phase shift. Aligning well with this concept, we assume that $\beta_k$ and $\mathbf{a}_k$ remain constant over the observation interval $LT_r$, while the phase shifts across snapshots, providing additional Doppler-domain separability.

\begin{figure}[!t]
\centerline{\resizebox{0.9\columnwidth}{!}{\includegraphics{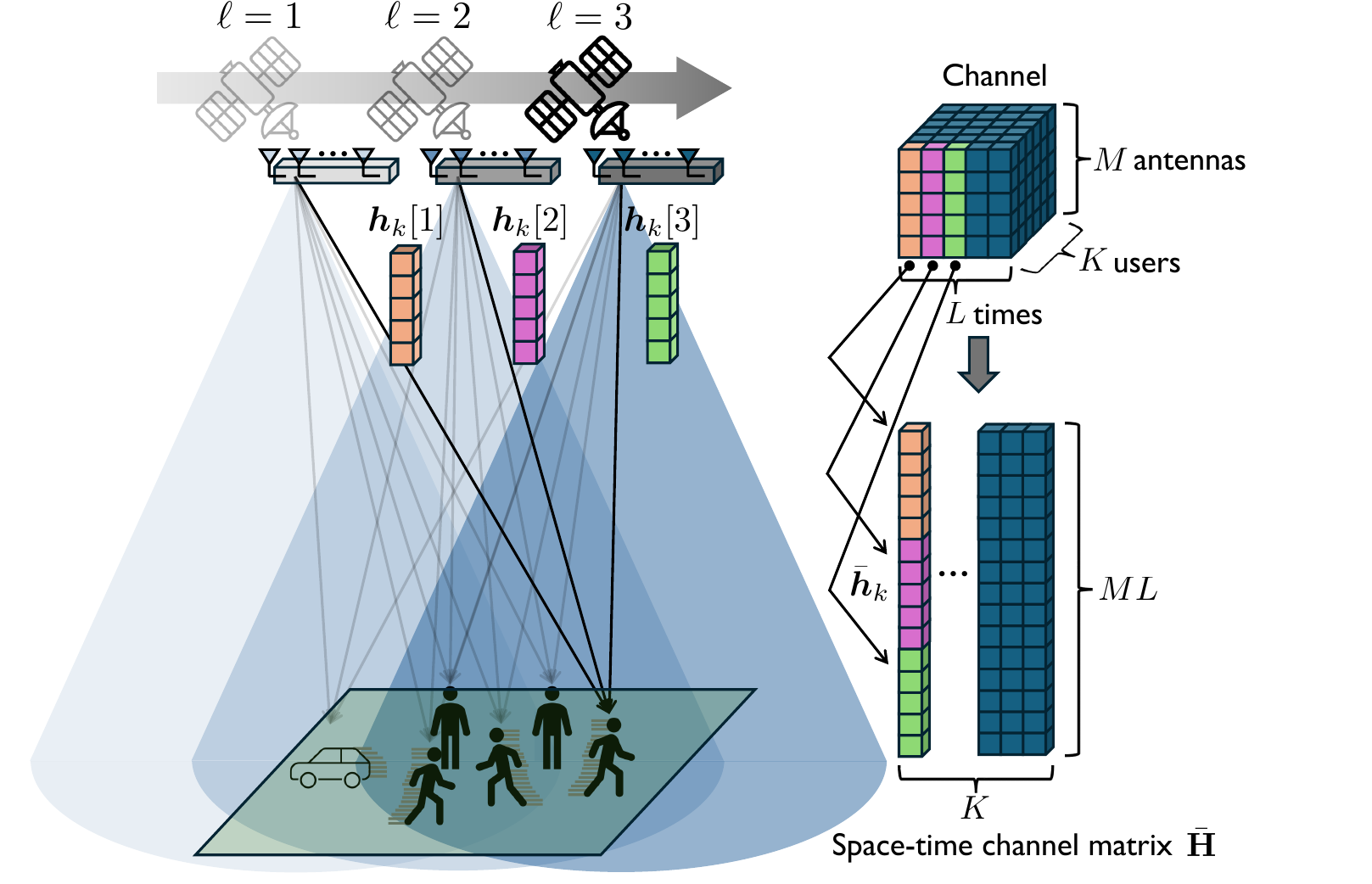}}}
\caption{Considered system model for STAB with repetition length $L=3$.}
\label{fig:system_model}
\end{figure}

To exploit this slow-time structure, the satellite transmits the same symbol vector $\mathbf{s}$ over $L$ consecutive snapshots using a precoder $\mathbf{F}[\ell]$, yielding $\mathbf{x}[\ell]= \sqrt{P}\mathbf{F}[\ell]\mathbf{s}$.
We construct the space-time representation by stacking these $L$ snapshots, as illustrated in Fig.~\ref{fig:system_model}.
Defining the temporal steering vector
\begin{align}
\mathbf{b}_k = \frac{1}{\sqrt{L}}
\begin{bmatrix}
1,e^{j2\pi\omega_k}, \cdots,e^{j2\pi(L-1)\omega_k}
\end{bmatrix}^{\mathsf T},
\end{align}
the stacked channel vector $\bar{\mathbf{h}}_k \in \mathbb{C}^{ML}$ of user $k$ is written as
\begin{align}
\bar{\mathbf{h}}_k = \sqrt{ML}\beta_k(\mathbf{b}_k \otimes \mathbf{a}_k).
\end{align}
Similarly, the stacked space-time precoder $\bar{\mathbf{F}} \in \mathbb{C}^{ML \times K}$ is defined as
\begin{align}
\bar{\mathbf{F}} =
\begin{bmatrix}
\mathbf{F}[0]^{\mathsf T}; \cdots; \mathbf{F}[L-1]^{\mathsf T}
\end{bmatrix}^{\mathsf T}
=
\begin{bmatrix}
\bar{\mathbf{f}}_1, \cdots, \bar{\mathbf{f}}_K
\end{bmatrix},
\label{eq:stab_st_precoder}
\end{align}
where $\bar{\mathbf{f}}_k$ is space-time precoding vector for user $k$.
By aggregating the normalized observations over $L$ snapshots, the overall space-time received signal $\bar{\mathbf{y}} \in \mathbb{C}^K$ is given by
\begin{align}
\bar{\mathbf{y}} = \sqrt{\frac{P}{L}}\bar{\mathbf{H}}^{\mathsf H}\bar{\mathbf{F}}\mathbf{s} + \bar{\mathbf{z}},
\label{eq:stab_system_model}
\end{align}
where $\bar{\mathbf{H}}=[\bar{\mathbf{h}}_1,\dots,\bar{\mathbf{h}}_K]\in\mathbb{C}^{ML\times K}$ is the space-time channel matrix, and $\bar{\mathbf{z}} \sim \mathcal{CN}(\mathbf{0}, \sigma^2\mathbf{I}_K)$ is the equivalent noise vector.

\subsection{Sum Rate Characterization with STAB}
\label{subsec:stab_sum_rate}
We apply ZF precoding to the space-time channel $\bar{\mathbf{H}}$. Under $\|\bar{\mathbf{F}}\|_F^2 = L$, the precoder is $\bar{\mathbf{F}} = \bar{\eta} \bar{\mathbf{H}}(\bar{\mathbf{H}}^{\mathsf{H}}\bar{\mathbf{H}})^{-1}$ with $\bar{\eta} = \sqrt{L/\text{tr}((\bar{\mathbf{H}}^{\mathsf{H}}\bar{\mathbf{H}})^{-1})}$. {This normalization fixes the average transmit power per snapshot, so STAB shares the power budget of spatial ZF and requires no additional hardware, the only cost growing with $L$ being the baseband precoder processing, which is negligible against the amplifier and payload power.} The resulting SINR for user $k$ is given by
\begin{align}
\overline{\mathrm{SINR}}_k = \frac{\rho}{L}\cdot\frac{L}{\text{tr}\left((\bar{\mathbf{H}}^{\mathsf{H}}\bar{\mathbf{H}})^{-1}\right)} = \frac{\rho ML}{\text{tr}\left(\bar{\mathbf{G}}^{-1}\right)},
\label{eq:stab_sinr}
\end{align}
where $\rho = P/\sigma^2$ and $\bar{\mathbf{G}} = \frac{1}{ML}\bar{\mathbf{H}}^{\mathsf{H}}\bar{\mathbf{H}}$ is the space-time channel Gram matrix. Since $L$ snapshots are consumed to transmit each symbol vector, the average sum rate is formulated as
\begin{align}
\mathbb{E}[\bar{R}_{\Sigma}] = \frac{K}{L} \mathbb{E} \left[ \log_2\left(1 + \frac{\rho ML}{\text{tr}\left(\bar{\mathbf{G}}^{-1}\right)}\right) \right],
\label{eq:stab_sumrate}
\end{align}
where the expectation is taken over the user locations and their associated Doppler shifts.

\begin{remark}[Intuition on STAB]\normalfont
STAB effectively expands the channel dimensionality from $M$ to $ML$ by augmenting the spatial signature $\mathbf{a}_k$ with the temporal signature $\mathbf{b}_k$. This mechanism provides several key insights:

\textbf{Synthetic Virtual Array:} STAB can be interpreted as forming a synthetic virtual array across the joint spatial and temporal dimensions. By collecting $L$ snapshots, the residual Doppler shifts of each user introduce distinct phase shifts over slow time, effectively expanding the system aperture from $M$ to $ML$. When the residual Doppler shifts are sufficiently resolvable over the observation interval, this Doppler-induced phase diversity plays a role similar to spatial phase diversity in a conventional array, allowing the satellite to resolve users that are nearly inseparable in the spatial domain alone. Consequently, the space-time user channels become less correlated, leading to a better-conditioned channel Gram matrix and improved robustness of ZF precoding. {However, this virtual array expansion does not imply a physical array expansion. Physical expansion enriches the spatial manifold, whereas STAB enriches the temporal manifold. Precisely because the two draw on different resources, with physical expansion requiring additional hardware and temporal expansion incurring the repetition penalty discussed below, they serve as complementary means of enhancing user separability.}

\textbf{Separability-Efficiency tradeoff:} STAB introduces a fundamental tradeoff between improved user separability and rate. A larger $L$ provides more degrees of freedom to decorrelate user channels and reduce the noise enhancement of ZF precoding, but it also incurs a $1/L$ pre-log penalty because each symbol vector is repeated over $L$ snapshots. Hence, STAB is particularly effective in interference- and separability-limited regimes, where the separability gain from temporal expansion dominates the associated loss in transmission efficiency.
\end{remark}

\section{Asymptotic Analysis of Satellite MU-MIMO}
\label{sec:asymptotic}
Building on the observation that user crowding severely impairs ZF performance and that STAB alleviates this issue through Doppler separability, we now turn to a formal analysis of these effects. This section derives the average sum rate in an asymptotic regime where $M \to \infty$, providing a clear characterization of the compensatory benefits offered by STAB.

For analytical clarity, we set $\beta_k = 1$ and adopt the small-angle approximations $u_{x, k} \approx \frac{x_k}{2H}$ and $u_{y, k} \approx \frac{y_k}{2H}$, justified by the high satellite altitude. Accordingly, the spatial frequencies are i.i.d. uniform over the support $\mathcal{U} = [-\frac{R}{2H}, \frac{R}{2H}]^2$. Similarly, the normalized Doppler frequency $\omega_k$ for user $k$ is assumed to be i.i.d. uniform over $\mathcal{W} = [-1/2, 1/2]$. The resulting Vandermonde structure in both domains plays a central role in characterizing the scaling behavior of the system.
Under these assumptions, we consider a power-law scaling regime where all relevant system dimensions are defined as fixed powers of the common large parameter $M$:
\begin{itemize}
    \item $M \to \infty$,
    \item $K = M^p$ with $0 < p \le 1$,
    \item $L = M^q$ with $q \ge 0$,
    \item $R/H = M^{-r}$ with $0 < r < 1$.
\end{itemize}

\begin{remark}[Interpretation of the scaling regime] \normalfont
\label{rem:scaling}
The scaling $R/H = M^{-r}$ admits both a practical and an analytical interpretation. 
From a practical standpoint, the exponent $r$ parametrizes the operating point of the LEO satellite communication system for a given antenna count and orbital altitude. A small $r$ corresponds to wide-area coverage where users are spread across many resolution cells (e.g., rural or maritime scenarios), whereas a large $r$ represents a narrow service region in which many users compete for a limited number of spatial degrees of freedom (e.g., dense urban hotspots). For instance, with $M = 256$ antennas at altitude $H = 600$~km, setting $r = 0.5$ yields $R/H \approx 1/16$ and thus $R \approx 37.5$~km, which corresponds to a cell size consistent with practical LEO spot beam footprints.
 
From an analytical perspective, the scaling $R/H = M^{-r}$ ensures that the asymptotic model remains interference-limited. If $R/H$ is fixed as $M \to \infty$, increasing array resolution would eventually separate all users regardless of $K$. 
In that case, the analysis would no longer capture the crowding-induced IUI observed in finite-dimensional systems. 
By shrinking $R/H$ with $M$, the analysis preserves a meaningful balance between user concentration and array resolution, making the asymptotic regime representative of dense finite-dimensional systems. 
\end{remark}

{
\begin{remark}[Interpretation of non-uniform user placement]
\normalfont
Geographical constraints, population density, or localized traffic hotspots may concentrate users within limited angular regions. Such non-uniformity does not change the array resolution, but reduces the number of spatial resolution bins that are effectively occupied. Let $B$ denote the spatial bin count under uniform placement. We introduce a concentration exponent $\mu\geq0$ and define
\begin{align}
B_{\mathrm{eff}}=M^{-\mu}B.
\end{align}
We assume that the effective support contains $B_{\mathrm{eff}}$ bins and that each effective-bin probability is of order $B_{\mathrm{eff}}^{-1}$ up to constant factors. The average bin load then increases from $K/B$ to $K/B_{\mathrm{eff}}$, and the subsequent scaling arguments extend by replacing $B$ with $B_{\mathrm{eff}}$. Thus, non-uniform placement preserves the structure of the scaling laws while adding a crowding exponent that reduces the effectively utilized spatial support.
\end{remark}}

We first present the analysis for ULA, where $\mathbf{a}_k = \mathbf{a}(u_{x, k}; M)$ and the support is $\mathcal{U} = [-\frac{R}{2H}, \frac{R}{2H}]$. These results are subsequently extended to the UPA model.

\subsection{Asymptotic Analysis: ULA Case}
\label{subsec:asymptotic_ula}
According to \eqref{eq:average_sum_rate}, the performance of ZF precoding is mainly determined by the channel correlation across users, which reflects their spatial proximity. Specifically, when the channel correlation matrix admits a Vandermonde structure, the performance is determined by the conditioning of the resulting Vandermonde matrix. In particular, since $\mathbf{G}$ is positive definite, letting $\lambda_1(\mathbf{G}) \leq \lambda_2(\mathbf{G}) \leq \cdots \leq \lambda_K(\mathbf{G})$ be the eigenvalues of $\mathbf{G}$, an upper bound on the sum rate can be expressed in terms of the minimum eigenvalue of the Gram matrix as
\begin{align} \label{eq:sumrate_lambda_min_ub}
    R_\Sigma \le M^{p}\log_2 \left(1+\rho M\,\lambda_{1}(\mathbf{G})\right),
\end{align}
where the inequality comes from the fact that $\frac{1}{\mathrm{tr}(\mathbf{G}^{-1})} =\frac{1}{\sum_{i=1}^K \lambda_i(\mathbf{G})^{-1}} \le \lambda_{1}(\mathbf{G})$.

The performance of ZF precoding is fundamentally limited by the spatial resolution of the antenna array: when users are spaced closer than $1/M$ in the spatial domain, the Vandermonde channel matrix becomes severely ill-conditioned. 
In this super-resolution regime, the minimum eigenvalue of the Gram matrix is governed by the size of the most crowded local cluster \cite{batenkov:siam:20, li:stable:2021}. 
Motivated by this observation, we partition the spatial domain into bins of width $1/M$. We define a cluster as the users within a single bin, and its size as the number of those users. 
Under the assumption of i.i.d. uniform user placement, this setup is naturally modeled as a balls-and-bins abstraction, as illustrated in Fig.~\ref{fig:balls_and_bins}.
In particular, this abstraction enables us to characterize the maximum load, denoted by $n_{\max}$, as follows. 

\begin{lemma}[Max load scaling in the balls-and-bins model]\label{lem:max_load}
Assume that $K=M^p$ balls are thrown i.i.d. uniform at random into $B=M^{1-r}$ bins. As $M\to\infty$, the maximum load $n_{\max}$ scales with high probability (w.h.p.) as
\begin{align}
    n_{\max}=
    \begin{cases}
        O(1), & p<1-r,\\
        \Theta\!\left(\dfrac{\log M}{\log\log M}\right), & p=1-r,\\
        \Theta\!\left(M^{p+r-1}\right), & p>1-r.
    \end{cases}
\end{align}
\end{lemma}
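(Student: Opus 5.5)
The plan is to handle the three regimes separately. In each, the upper bound comes from a first-moment (union) bound over the $B$ bins applied to the binomial load of a single bin. Each bin load $N_b$ is $\mathrm{Bin}(K,1/B)$ with mean $\mu=K/B=M^{p+r-1}$. The whole proof rests on the sign of the exponent $p+r-1$: $\mu\to0$ polynomially fast, $\mu=1$, or $\mu\to\infty$ polynomially fast. Lower bounds are only needed in the last two regimes.

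\textbf{Case $p<1-r$ (sparse).} For a fixed integer $t$, use
\begin{align*}
\mathbb{P}(n_{\max}\ge t)\le B\binom{K}{t}B^{-t}\le B\,\mu^{t}=M^{(1-r)-t(1-r-p)} .
\end{align*}
Choose the constant $t>(1-r)/(1-r-p)$, which does not depend on $M$. Then the probability tends to $0$, so $n_{\max}\le t=O(1)$ w.h.p.

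\textbf{Case $p>1-r$ (dense).} The lower bound is deterministic: $n_{\max}\ge K/B=\mu$ by pigeonhole. For the upper bound, apply the multiplicative Chernoff bound $\mathbb{P}(N_b\ge 2\mu)\le e^{-\mu/3}$ and take a union bound. This gives $\mathbb{P}(n_{\max}\ge 2\mu)\le M^{1-r}e^{-M^{p+r-1}/3}\to0$, because $\mu$ grows polynomially while $\log B$ grows only logarithmically. Hence $n_{\max}=\Theta(M^{p+r-1})$ w.h.p.

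\textbf{Case $p=1-r$ (critical).} Here $K=B=:n=M^{1-r}$, which is the classical balanced balls-in-bins setting. Since $\log n=(1-r)\log M$ and $\log\log n=\log\log M+O(1)$, the target order $\log n/\log\log n$ is the same as $\log M/\log\log M$.

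For the upper bound, use $\mathbb{P}(N_b\ge t)\le \binom{n}{t}n^{-t}\le 1/t!\le (e/t)^t$. Take $t=c\log n/\log\log n$ with $c>1$. Then $n(e/t)^t=\exp\!\big(\log n-(c-o(1))\log n\big)\to0$.

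The lower bound is the step I expect to be the main obstacle, because the bin loads are dependent. For $t=c'\log n/\log\log n$ with $c'<1$, a single bin satisfies $\mathbb{P}(N_b\ge t)\ge \mathbb{P}(N_b=t)\ge \frac{1}{e\,t!}(1-o(1))$, and this equals $n^{-c'+o(1)}$. So the expected number $X$ of bins with load at least $t$ satisfies $\mathbb{E}[X]\to\infty$.

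I would first try to conclude $X>0$ w.h.p. via negative association of multinomial occupancy counts: the indicators $\mathds{1}\{N_b\ge t\}$ are then negatively correlated, so $\mathrm{Var}(X)\le\mathbb{E}[X]$, and Chebyshev gives $\mathbb{P}(X=0)\le 1/\mathbb{E}[X]\to0$. As a fallback, Poissonization (throw $\mathrm{Poisson}(n)$ balls so that the loads become independent, then de-Poissonize, which costs at most a factor $O(\sqrt{n})$ in probability) gives
\begin{align*}
\mathbb{P}(X=0)\le e\sqrt{n}\,\big(1-n^{-c'+o(1)}\big)^{n}\le e\sqrt{n}\,\exp\!\big(-n^{1-c'-o(1)}\big)\to0 .
\end{align*}
Combining the two directions yields $n_{\max}=\Theta(\log M/\log\log M)$ w.h.p.
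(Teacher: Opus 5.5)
Your proposal is correct and follows the paper's proof regime by regime: a union bound with a binomial tail estimate for $p<1-r$; pigeonhole plus the Chernoff bound $\mathbb{P}(N_b\ge 2\mu)\le e^{-\mu/3}$ and a union bound for $p>1-r$; and, for $p=1-r$, a first-moment upper bound with a second-moment lower bound at $t=c\log B/\log\log B$. In the critical lower bound, your use of negative association of the multinomial occupancy counts actually proves $\mathrm{Var}(X)\le\mathbb{E}[X]$, whereas the paper only asserts $\mathbb{E}[X_bX_{b'}]\le(1+o(1))\mathbb{P}(n_b\ge k_\alpha)^2$; your Poissonization fallback is the same device the paper uses later in its proof of Theorem~1.
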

\begin{proof}
    See Appendix~\ref{ap:A}.    
\end{proof}

Subsequently, we reveal the impact of the cluster size on the conditioning of the channel Gram matrix. In particular, we characterize how the minimum eigenvalue deteriorates when $n$ users are packed within the same resolution bin.

\begin{figure}[!t]
    \centerline{\resizebox{0.8\columnwidth}{!}{\includegraphics{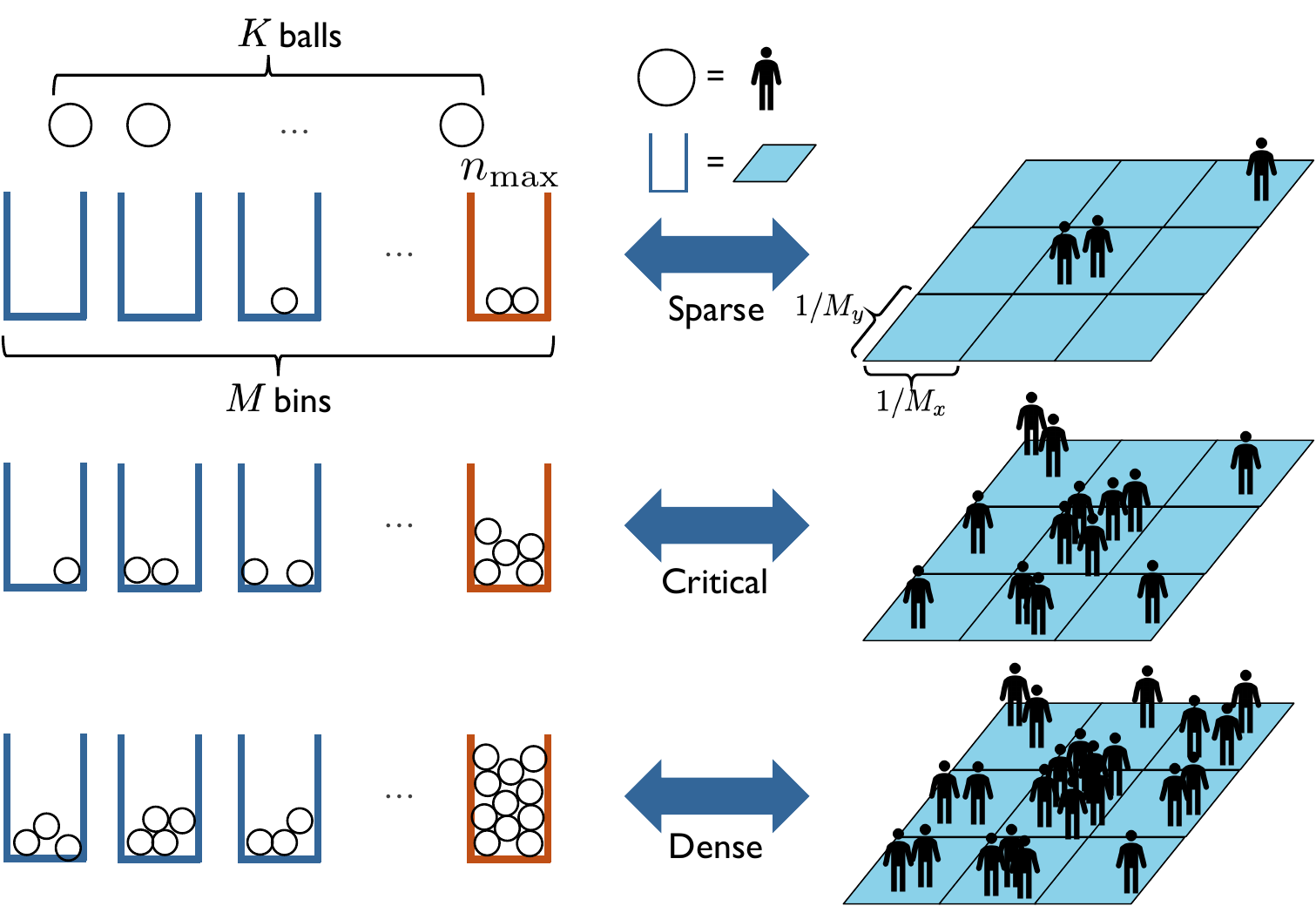}}}
    \caption{Balls-and-bins interpretation of spatial user distribution across sparse, critical, and dense regimes.}
    \label{fig:balls_and_bins}
\end{figure}

\begin{lemma}[Upper bound on $\lambda_1$ under a single cluster]
\label{lem:lambda_min_ub}
{
Suppose that $n\ge2$ users are arbitrarily located within a single resolution bin of dimension $D\in\{1,2,3\}$. Then, their Gram matrix $\mathbf{G}$ satisfies
\begin{align}
\lambda_1(\mathbf{G})
\le
n e^{4\pi D}
\left(
\frac{2\pi D e}{J_D(n)}
\right)^{2J_D(n)},
\label{eq:lambda_min_ub}
\end{align}
where the Taylor cancellation order $J_D(n)$ is defined as
\begin{equation}
J_D(n)
=
\max\left\{
 j\in\mathbb{N}:
 \binom{j-1+D}{D}\le n-1
\right\}.
\label{eq:taylor_order}
\end{equation}
For $D=1$, $J_1(n)=n-1$, and \eqref{eq:lambda_min_ub} reduces to
\begin{align}
\lambda_1(\mathbf{G})
\le
n e^{4\pi}
\left(
\frac{2\pi e}{n-1}
\right)^{2(n-1)}.
\label{eq:lambda_min_ub_1d}
\end{align}
}
\end{lemma}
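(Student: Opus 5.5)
Since $\mathbf{G}$ is Hermitian positive semidefinite, I will bound its minimum eigenvalue through the Rayleigh quotient: for any unit vector $\mathbf{c}\in\mathbb{C}^n$,
\begin{align}
\lambda_1(\mathbf{G})\le \mathbf{c}^{\mathsf H}\mathbf{G}\mathbf{c}=\frac{1}{M}\Big\|\sum_{k=1}^n c_k\mathbf{h}_k\Big\|_2^2 .
\end{align}
The plan is to construct $\mathbf{c}$ so that the combination $\sum_k c_k \mathbf{h}_k$ nearly cancels. I will expand the steering entries around a common reference point of the bin, which I take to be the bin center $\mathbf{u}_0$. Write $\mathbf{u}_k=\mathbf{u}_0+\boldsymbol{\delta}_k$ with $\|\boldsymbol{\delta}_k\|_\infty\le 1/(2M)$; in the UPA case I read this as $1/(2M_x)$ and $1/(2M_y)$ per coordinate. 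Each entry of $\sqrt{M}\mathbf{a}_k$ then has the form $e^{j2\pi \mathbf{m}^{\mathsf T}\mathbf{u}_0}e^{j2\pi \mathbf{m}^{\mathsf T}\boldsymbol{\delta}_k}$, where $\mathbf{m}$ ranges over the index grid. The common phase has unit modulus and can be dropped. I will Taylor-expand the remaining factor as
\begin{align}
\sum_{\boldsymbol{\alpha}}\frac{(j2\pi)^{|\boldsymbol{\alpha}|}\mathbf{m}^{\boldsymbol{\alpha}}\boldsymbol{\delta}_k^{\boldsymbol{\alpha}}}{\boldsymbol{\alpha}!}.
\end{align}

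\textbf{Step 1 (moment cancellation).} I impose the conditions $\sum_k c_k\boldsymbol{\delta}_k^{\boldsymbol{\alpha}}=0$ for every multi-index with $|\boldsymbol{\alpha}|\le J-1$, where $J=J_D(n)$. The number of such monomials is $\binom{J-1+D}{D}$, which by the definition \eqref{eq:taylor_order} is at most $n-1$. A homogeneous linear system with at most $n-1$ equations in $n$ unknowns has a nonzero solution, so after normalization a unit vector $\mathbf{c}$ exists that annihilates all Taylor terms of total degree below $J$. In the case $D=1$ this count is $\binom{J}{1}=J\le n-1$, which gives $J_1=n-1$.

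\textbf{Step 2 (tail bound per entry).} After the cancellation, each entry equals
\begin{align}
\sum_{|\boldsymbol{\alpha}|\ge J}\frac{(j2\pi)^{|\boldsymbol{\alpha}|}\mathbf{m}^{\boldsymbol{\alpha}}}{\boldsymbol{\alpha}!}\sum_k c_k\boldsymbol{\delta}_k^{\boldsymbol{\alpha}}.
\end{align}
I group the terms by total degree $s=|\boldsymbol{\alpha}|$. Each coordinate satisfies $|m_i\delta_{k,i}|\le 1/2$, since $m_i\le M_i$ and $|\delta_{k,i}|\le 1/(2M_i)$. The multinomial identity then gives
\begin{align}
\sum_{|\boldsymbol{\alpha}|=s}\frac{|\mathbf{m}^{\boldsymbol{\alpha}}\boldsymbol{\delta}_k^{\boldsymbol{\alpha}}|}{\boldsymbol{\alpha}!}\le \frac{(D/2)^s}{s!}.
\end{align}
Combining this with $\sum_k|c_k|\le\sqrt n$ (Cauchy--Schwarz), I obtain the entrywise bound
\begin{align}
\sqrt{n}\sum_{s\ge J}\frac{(\pi D)^s}{s!}\le \sqrt{n}\,\frac{(\pi D)^J}{J!}\,e^{\pi D}.
\end{align}
The constants are deliberately loose; the bin width and the use of $\mathbf{u}_0$ only change the base, and the stated bound has room for this.

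\textbf{Step 3 (assemble).} There are $M$ entries, so the squared norm is at most $M$ times the squared entry bound, and the factor $1/M$ in the Rayleigh quotient cancels it. This leaves
\begin{align}
\lambda_1(\mathbf{G})\le n\,e^{2\pi D}\left(\frac{(\pi D)^J}{J!}\right)^2.
\end{align}
Applying Stirling's bound $J!\ge (J/e)^J$ gives $n e^{2\pi D}(\pi D e/J)^{2J}$, which is dominated by the claimed expression $n e^{4\pi D}(2\pi D e/J)^{2J}$. The looser constants $2\pi$ and $e^{4\pi D}$ absorb the alternative bin conventions, for instance expanding about a user's own location with $|\delta|\le 1/M$, or exact bin-edge effects.

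The main obstacle is Step 2. The difficulty is controlling the multivariate tail uniformly over all $M$ array indices, keeping the product $\mathbf{m}\cdot\boldsymbol{\delta}$ bounded by an $O(D)$ constant independently of $M$. This requires carefully matching the bin width in each dimension to the corresponding array size. Step 1 is a routine dimension count.
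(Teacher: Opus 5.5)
Your proposal is correct and follows the paper's proof in Appendix~B essentially step for step: a Rayleigh-quotient test vector satisfying the $\binom{J_D(n)-1+D}{D}\le n-1$ moment constraints, a uniform Taylor-tail bound on every array entry using $\|\mathbf c\|_1\le\sqrt n$, and Stirling's bound $J!\ge (J/e)^J$. The only difference is cosmetic. You expand about the bin center, so $|m_i\delta_{k,i}|\le 1/2$ and you obtain the sharper $n e^{2\pi D}(\pi D e/J)^{2J}$, whereas the paper expands about the bin's left endpoint with offsets in $[0,1]^D$, which yields exactly the stated constants. Note also that in the STAB cases the number of entries and the Gram normalization are both $ML$ rather than $M$, and they cancel in the same way.
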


\begin{proof}
See Appendix~\ref{ap:B}.
\end{proof}
{$D=1$ corresponds to the spatial ULA, $D=2$ to the ULA with STAB or the spatial UPA, and $D=3$ to the UPA with STAB. Lemma~\ref{lem:lambda_min_ub} shows that the minimum eigenvalue decays super-exponentially with the cluster size $n$, uniformly over the user locations within the bin.} This observation is then used in the following lemma to derive an upper bound on the degradation of the ZF sum rate.

\begin{lemma}[ZF sum rate upper bound]\label{lem:zf_sumrate}
Suppose that the maximum load is $n_{\max} = n$. Then, the upper bound of the average sum rate under ZF precoding is given by 
\begin{align}
\mathbb{E}[R_\Sigma  \mid n_{\max} = n] \lesssim M^{p}\log_2 \left(1 + \rho M\,{n\,e^{4\pi}\left(\frac{2\pi e}{n-1}\right)^{2(n-1)}}\right)
\label{eq:zf_sumrate_ub_maxocc}
\end{align}
\end{lemma}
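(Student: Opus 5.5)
The plan is to chain three ingredients: the eigenvalue-based bound \eqref{eq:sumrate_lambda_min_ub}, Cauchy interlacing, and Lemma~\ref{lem:lambda_min_ub} with $D=1$.

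\textbf{Step 1: reduce to a principal submatrix.} Condition on the event $n_{\max}=n$. On this event there is at least one resolution bin of width $1/M$ that contains $n$ users. Let $\mathcal{S}$ be the index set of those users and let $\mathbf{G}_{\mathcal{S}}$ be the corresponding $n\times n$ principal submatrix of $\mathbf{G}$. Since $\mathbf{G}$ is Hermitian positive definite, Cauchy interlacing gives
\begin{align}
\lambda_1(\mathbf{G}) \le \lambda_1(\mathbf{G}_{\mathcal{S}}).
\end{align}
Equivalently, $\lambda_1(\mathbf{G})=\min_{\|\mathbf{v}\|=1}\mathbf{v}^{\mathsf H}\mathbf{G}\mathbf{v}$, and restricting the minimization to vectors supported on $\mathcal{S}$ can only increase the minimum.

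\textbf{Step 2: bound the cluster eigenvalue.} The matrix $\mathbf{G}_{\mathcal{S}}$ is exactly the Gram matrix of $n$ ULA steering vectors whose locations lie in a single bin. Lemma~\ref{lem:lambda_min_ub} with $D=1$ therefore applies, and by \eqref{eq:lambda_min_ub_1d},
\begin{align}
\lambda_1(\mathbf{G}_{\mathcal{S}}) \le n e^{4\pi}\left(\frac{2\pi e}{n-1}\right)^{2(n-1)} .
\end{align}
This bound holds uniformly over the positions of the users inside the bin. It therefore also holds almost surely under the conditional distribution of the user locations given $n_{\max}=n$, for every realization.

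\textbf{Step 3: insert into the rate bound and take expectations.} Substitute the bound from Step 2 into \eqref{eq:sumrate_lambda_min_ub}. Since $x\mapsto\log_2(1+\rho M x)$ is increasing, this yields a deterministic upper bound on $R_\Sigma$ on the event $n_{\max}=n$. Taking the conditional expectation of a pointwise bound gives \eqref{eq:zf_sumrate_ub_maxocc} directly, with constant $1$. The $\lesssim$ in the statement simply absorbs any constant, for example if one prefers to write the result via the exact SINR expression \eqref{eq:zf_sinr} rather than the $\lambda_1$ bound.

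\textbf{Main obstacle.} The only subtle point is making sure the interlacing step is legitimate under the paper's binning. One must check that the $n$ users counted by $n_{\max}$ really do lie within a common interval of width $1/M$, which is the hypothesis of Lemma~\ref{lem:lambda_min_ub}. The bins are fixed intervals of width $1/M$ partitioning $\mathcal{U}$, which has length $R/H=M^{-r}$ and so contains $M^{1-r}$ bins, consistent with Lemma~\ref{lem:max_load}. Hence the hypothesis holds by construction.

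I would also note the degenerate case $n=1$, where $(n-1)$ appears in a denominator. There I would interpret the bound as trivial, or use $\lambda_1\le 1$ since the diagonal entries of $\mathbf{G}$ equal $1$. The lemma is intended for $n\ge2$, matching Lemma~\ref{lem:lambda_min_ub}.
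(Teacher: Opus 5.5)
Your proposal is correct and follows essentially the same route as the paper's proof in Appendix~\ref{ap:C}. Both chain the eigenvalue bound \eqref{eq:sumrate_lambda_min_ub}, Cauchy interlacing to pass to the principal submatrix of the most-loaded bin, and the one-dimensional bound \eqref{eq:lambda_min_ub_1d} of Lemma~\ref{lem:lambda_min_ub}; your remark on the degenerate case $n=1$ is a harmless addition that the paper leaves implicit.
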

\begin{proof}
See Appendix~\ref{ap:C}.
\end{proof}
Lemma~\ref{lem:zf_sumrate} shows that the single most crowded bin sets the bottleneck. By the interlacing property of Hermitian matrices, adding users from other bins cannot improve the conditioning, so however favorably the remaining users are placed, they cannot undo the cluster that attains $n_{\max}$. {It thus suffices that the small-angle approximation leave the scaling of $n_{\max}$ intact. The approximation enters only through the uniform distribution assumed in Lemma~\ref{lem:max_load}, and under $R/H=M^{-r}$ the true spatial frequency density deviates from uniform by a factor of $1+o(1)$, which does not affect the scaling order.}

Now, we connect the scaling of $n_{\max}$ (Lemma~\ref{lem:max_load}) to the sum rate upper bound of ZF (Lemma~\ref{lem:zf_sumrate}), by which we characterize the ZF performance. Theorem \ref{thm:ERsum_ULA} is the main result of this section.

\begin{theorem}[Upper scaling law of the average sum rate $R_{\Sigma}$]
\label{thm:ERsum_ULA}
The scaling law of the average sum rate under ZF precoding is upper-bounded as

\textbf{\normalfont{(i) Sparse regime:} $p<1-r$.}
\begin{equation}
\label{eq:ER_sparse}
\mathbb{E}[R_{\Sigma}]\;\lesssim\;M^p\log M.
\end{equation}

\textbf{\normalfont{(ii) Critical regime:} $p=1-r$.}
\begin{equation}
\label{eq:ER_critical}
\mathbb{E}[R_{\Sigma}]\;\lesssim\;
\begin{cases}
M^{r+o(1)}, & r<\frac{1}{2},\\[2pt]
M^{1/2+o(1)}, & r=\frac{1}{2},\\[2pt]
M^{1-r+o(1)}\log M, & r>\frac{1}{2}.
\end{cases}
\end{equation}

\textbf{\normalfont{(iii) Dense regime:} $p>1-r$.}
\begin{equation}
\label{eq:ER_dense}
\mathbb{E}[R_{\Sigma}]\;\to\; 0.
\end{equation}
\end{theorem}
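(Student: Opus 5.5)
The plan is to combine Lemma~\ref{lem:max_load} with Lemma~\ref{lem:zf_sumrate}. We condition on the high-probability event $\mathcal{E}$ on which $n_{\max}$ has the order stated in Lemma~\ref{lem:max_load}, and bound the complementary contribution separately. Write
\begin{align*}
\mathbb{E}[R_\Sigma]\le \mathbb{E}[R_\Sigma\mathds{1}\{\mathcal{E}\}]+\mathbb{P}(\mathcal{E}^c)\,\sup R_\Sigma .
\end{align*}
Since $\lambda_1(\mathbf{G})\le 1$ (the diagonal entries of $\mathbf{G}$ equal one), the crude bound $R_\Sigma\le M^p\log_2(1+\rho M)$ holds deterministically. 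It therefore suffices to show that $\mathbb{P}(\mathcal{E}^c)$ decays faster than any power of $M$. This follows from the standard Chernoff/union bound used to prove Lemma~\ref{lem:max_load}, taking the constants in $\mathcal{E}$ large enough.

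On $\mathcal{E}$, we insert $n=n_{\max}$ into \eqref{eq:zf_sumrate_ub_maxocc} and treat the three regimes separately.

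\emph{Sparse regime.} Here we simply use $\lambda_1\le1$, which gives $M^p\log_2(1+\rho M)\lesssim M^p\log M$.

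\emph{Dense regime.} We have $n=\Theta(M^{p+r-1})$, so $\log\big(n(2\pi e/(n-1))^{2(n-1)}\big)=-2(n-1)\log n\,(1+o(1))$. This is $-\Theta(M^{p+r-1}\log M)$, which swamps the $\log(\rho M)$ factor. Hence $\rho M\lambda_1\le \exp(-cM^{p+r-1}\log M)$. Using $\log(1+x)\le x$, the bound becomes $M^{p}\exp(-cM^{p+r-1}\log M)\to0$, and the $\mathcal{E}^c$ term also vanishes by the superpolynomial tail.

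The critical regime is the delicate case and the main obstacle. With $n=\Theta(\log M/\log\log M)$ we get $n^{-2n(1+o(1))}=M^{-2c+o(1)}$, where $c$ is the constant with $n_{\max}\approx c\log M/\log\log M$. The precise balls-into-bins constant is $c=1$ when $K=B$; a Poisson approximation of the bin loads confirms that w.h.p.\ $n_{\max}\ge(1-\epsilon)\log M/\log\log M$. So $\rho M\lambda_1\le M^{1-2+o(1)}=M^{-1+o(1)}$ in the worst case. However, $\lambda_1\le1$ alone gives $M^{1-r}\log M$, and to obtain the stated exponents $M^{r+o(1)}$ and $M^{1/2+o(1)}$ we must track how many users are affected rather than only the single worst bin. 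My concrete approach is to refine the bound through $\mathrm{tr}(\mathbf{G}^{-1})\ge\sum_i\lambda_i^{-1}$, using Cauchy interlacing on each bin. Every bin with load $m$ forces $m$-th order degeneracy, so $\mathrm{tr}(\mathbf{G}^{-1})\gtrsim \max_m N_m\,\big(m^{-2m}\big)^{-1}$, where $N_m\approx B/(e\,m!)$ is the number of bins with load $m$.

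Optimizing over $m$ yields a lower bound of the form $M^{\alpha+o(1)}$ on the trace, and hence $\log(1+\rho M/\mathrm{tr})$ either is $\Theta(\log M)$ or behaves like $M^{1-\alpha+o(1)}$. Whether $\rho M/\mathrm{tr}(\mathbf{G}^{-1})$ falls below one is decided by comparing the crowding penalty with $M^{1-r}$. When it falls below one, we use $\log(1+x)\le x$, which gives $M^{1-r}\cdot M^{1-\alpha}$. When it does not, we keep $M^{1-r}\log M$. The breakpoint at $r=1/2$ should emerge as the point where these two branches meet, and this balance is the step I expect to require the most care with the $o(1)$ exponents. Finally, the $\mathcal{E}^c$ contribution is absorbed as in the other regimes.
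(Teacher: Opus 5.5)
Your sparse and dense cases are fine and essentially coincide with the paper's argument. In the dense case the pigeonhole bound $n_{\max}\ge\lceil M^{p+r-1}\rceil$ is deterministic, so no tail estimate is needed. The gap is in the critical regime, specifically $r<1/2$; for $r\ge 1/2$ the crude bound $M^{1-r}\log M$ already matches the statement.

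\textbf{The max-load constant.} Lemma~\ref{lem:max_load} gives $n_{\max}\approx\log B/\log\log B$ with $B=M^{1-r}$. In units of $\log M$ the constant is therefore $c=1-r$, not $1$, so your claimed w.h.p.\ bound $n_{\max}\ge(1-\epsilon)\log M/\log\log M$ is false for $\epsilon<r$. Take $t=\lfloor(1-\epsilon_B)\log B/\log\log B\rfloor$ with $\epsilon_B\to0$. Then $t\log t=(1+o(1))(1-r)\log M$. Applying Lemma~\ref{lem:lambda_min_ub} to $t$ users of the worst bin via interlacing, as in Lemma~\ref{lem:zf_sumrate}, gives $\rho M\lambda_1(\mathbf G)\le\rho M\,B^{-2+o(1)}=\rho M^{2r-1+o(1)}$, not $M^{-1+o(1)}$. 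Substituting into $M^{1-r}\log_2(1+\rho M^{2r-1+o(1)})$, and using $\log(1+x)\le x$ when $2r-1<0$, yields all three cases at once. The breakpoint $r=1/2$ is just the sign of $2r-1$. Your value $M^{-1+o(1)}$ would have bounded the good-event contribution by $M^{-r+o(1)}$, far stronger than the theorem, which should have signalled that a factor was lost.

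\textbf{The proposed multi-bin repair.} Because of this slip, the repair is misdirected. A larger lower bound on $\operatorname{tr}(\mathbf G^{-1})$ can only decrease the rate bound, so counting more degenerate bins cannot recover larger exponents. Moreover, $\max_m N_m m^{2m(1+o(1))}\approx\max_m B\,m^{m(1+o(1))}$ subject to $N_m\ge1$ is attained at $m\approx n_{\max}$. Its value is $B^{2+o(1)}$, exactly the single-worst-bin value. Bin-by-bin interlacing would not give $N_m$ small eigenvalues simultaneously anyway; you would need Ky Fan's inequality with the disjointly supported cancellation vectors, followed by AM--HM. As written, $\alpha$ is never computed, so the exponents $r$, $1/2$, $1-r$ are not established.

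\textbf{The bad-event term.} For $r<1/2$ this term needs a quantitative lower-tail estimate. The crude bound $M^{1-r}\log M$ exceeds the target $M^{r+o(1)}$, so $\mathbb P(n_{\max}<t)$ must be at most $M^{2r-1+o(1)}$. This does not follow from the Chernoff/union argument of Lemma~\ref{lem:max_load}, which controls upper tails; here there is no constant to take ``large enough''. Nor does it follow from that lemma's second-moment step, which gives only $1-o(1)$. The paper obtains $\mathbb P(n_{\max}<t_B)\le\exp(-B^{\epsilon_B+o(1)})$ by Poissonization. It divides the Poissonized probability $(1-p_B)^B$ by $\mathbb P(K_e=K)=\Theta(K^{-1/2})$.
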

\begin{proof}
See Appendix~\ref{ap:D}.
\end{proof}

Now we interpret Theorem \ref{thm:ERsum_ULA}.
In the dense regime where $p > 1-r$, the number of users grows faster than the available spatial resolution bins, so that user crowding within each resolution bin becomes unavoidable. 
As a result, the channel Gram matrix becomes increasingly ill-conditioned, and the ZF sum rate collapses to zero as $M \rightarrow \infty$. 
This motivates the use of STAB, which expands the channel into a joint space-Doppler domain and thereby alleviates the crowding-induced ill-conditioning. The following lemma quantifies the eigenvalue behavior under STAB by analyzing a single cluster in the augmented space-Doppler bin.

\begin{lemma}[ZF sum rate upper bound in STAB]\label{lem:zf_sumrate_stab}
Suppose that the maximum load is $n_{\max}=n\ge2$ in the space-Doppler domain. Then, the upper bound of the average sum rate under ZF precoding in STAB is given by
\begin{align}
&\mathbb{E}[\bar{R}_\Sigma \mid n_{\max} = n] \notag\\
&\lesssim M^{p-q}\log_2 \left(1 + \rho M^{1+q}\,{n\,e^{8\pi}\left(\frac{4\pi e}{J_2(n)}\right)^{2J_2(n)}}\right).
\label{eq:zf_sumrate_ub_maxocc_stab}
\end{align}
\end{lemma}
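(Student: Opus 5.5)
The plan mirrors the argument behind Lemma~\ref{lem:zf_sumrate}, applied to the space-time Gram matrix $\bar{\mathbf{G}}$ in place of $\mathbf{G}$.

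\textbf{Step 1: reduce to the minimum eigenvalue.} With $\beta_k=1$, the stacked channel is $\bar{\mathbf{h}}_k=\sqrt{ML}\,(\mathbf{b}_k\otimes\mathbf{a}_k)$. Hence $\bar{\mathbf{G}}$ has entries
\[
[\bar{\mathbf{G}}]_{ik}=(\mathbf{b}_i^{\mathsf H}\mathbf{b}_k)(\mathbf{a}_i^{\mathsf H}\mathbf{a}_k).
\]
This is the Gram matrix of normalized two-dimensional Vandermonde (Fourier) vectors, with frequencies $(u_{x,k},\omega_k)$ on an $M\times L$ grid. Repeating the reasoning behind \eqref{eq:sumrate_lambda_min_ub}, namely $1/\mathrm{tr}(\bar{\mathbf{G}}^{-1})\le\lambda_1(\bar{\mathbf{G}})$, in \eqref{eq:stab_sumrate} gives
\[
\bar R_\Sigma\le \frac{K}{L}\log_2\!\big(1+\rho ML\,\lambda_1(\bar{\mathbf{G}})\big)=M^{p-q}\log_2\!\big(1+\rho M^{1+q}\lambda_1(\bar{\mathbf{G}})\big).
\]

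\textbf{Step 2: localize to the most crowded bin.} A space-Doppler resolution bin has size $1/M\times 1/L$. Let $\mathcal{S}$ be the $n=n_{\max}$ users in the most crowded bin. The principal submatrix $\bar{\mathbf{G}}_{\mathcal{S}}$ is itself the Gram matrix of those users. By Cauchy interlacing,
\[
\lambda_1(\bar{\mathbf{G}})\le\lambda_1(\bar{\mathbf{G}}_{\mathcal{S}}),
\]
so users outside $\mathcal{S}$ can only worsen the bound.

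\textbf{Step 3: apply the single-cluster lemma.} Invoke Lemma~\ref{lem:lambda_min_ub} with $D=2$. After rescaling, the bin becomes a unit cell in the normalized coordinates $(Mu,L\omega)$, matching the setting of that lemma. This yields
\[
\lambda_1(\bar{\mathbf{G}}_{\mathcal{S}})\le n\,e^{8\pi}\left(\frac{4\pi e}{J_2(n)}\right)^{2J_2(n)},
\]
uniformly over the positions of the users inside the bin.

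\textbf{Step 4: take the conditional expectation.} The bound from Step 3 is deterministic given $n_{\max}=n$, and $\log_2(1+\rho M^{1+q}x)$ is increasing in $x$. Taking $\mathbb{E}[\,\cdot\mid n_{\max}=n]$ of the pointwise inequality therefore gives \eqref{eq:zf_sumrate_ub_maxocc_stab}. The $\lesssim$ absorbs constants, for example from bins straddling the support edges, as in Lemma~\ref{lem:zf_sumrate}.

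\textbf{Main obstacle.} The delicate point is Step 3: checking that the space-time steering vectors $\mathbf{b}_k\otimes\mathbf{a}_k$, which have unequal side lengths $M$ and $L$, fit the hypotheses of Lemma~\ref{lem:lambda_min_ub} for $D=2$. The first thing to try is to repeat the Taylor-cancellation construction of that lemma directly on $\bar{\mathbf{G}}_{\mathcal{S}}$:
\begin{itemize}
\item Expand each vector $\mathbf{b}_k\otimes\mathbf{a}_k$ around the bin center in the two normalized offsets $\delta_{x,k}=M(u_{x,k}-u_0)$ and $\delta_{\omega,k}=L(\omega_k-\omega_0)$, both of magnitude at most $1/2$ (or at most $1$).
\item Choose a unit coefficient vector $\mathbf{c}$ annihilating all bivariate monomials of total degree below $J_2(n)$. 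There are $\binom{J_2(n)+1}{2}\le n-1$ such monomials, so a nontrivial $\mathbf{c}$ exists.
\item Bound the Rayleigh quotient $\mathbf{c}^{\mathsf H}\bar{\mathbf{G}}_{\mathcal{S}}\mathbf{c}$ by the squared Taylor remainder, and control the factorial with Stirling's formula.
\end{itemize}
The constant is adjusted from $2\pi D e$ with $D=2$ to $4\pi e$, and from $e^{4\pi D}$ to $e^{8\pi}$, exactly as stated.
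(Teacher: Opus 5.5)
Your proposal is correct and follows essentially the same route as the paper. The paper proves this lemma by repeating the argument of Appendix~\ref{ap:C}: the bound $1/\mathrm{tr}(\bar{\mathbf G}^{-1})\le\lambda_1(\bar{\mathbf G})$ with pre-log $K/L=M^{p-q}$ and SNR factor $\rho ML$, then Cauchy interlacing onto the most-loaded space-Doppler bin, then Lemma~\ref{lem:lambda_min_ub} at $D=2$, whose Taylor-cancellation proof you reproduce. Two cosmetic differences are harmless: expanding about the bin center only tightens the constants, and edge bins need no extra constant because a partial bin still lies inside a $1/M\times 1/L$ cell.
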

\begin{proof}
The proof follows the same argument as that of Lemma~\ref{lem:zf_sumrate} in Appendix~\ref{ap:C}, {with Lemma~\ref{lem:lambda_min_ub} applied at $D=2$.}
\end{proof}

\begin{theorem}[Upper scaling law of the average sum rate $\bar{R}_{\Sigma}$]
\label{thm:ERsum_STAB}
Consider the STAB with $L=M^q$, and let $\delta = p+r-1$ be the scaling threshold. Then, the scaling law of the STAB average sum rate is upper-bounded as

\textbf{\normalfont{(i) Sparse/critical regime:} $q \geq \delta$.}
\begin{equation}
\label{eq:ER_tilde_sparse}
\mathbb{E}[\bar{R}_{\Sigma}]\;\lesssim\;M^{p-q}\log M.
\end{equation}

\textbf{\normalfont{(ii) Dense regime:} $q < \delta$.}
\begin{equation}
\label{eq:ER_tilde_dense}
\mathbb{E}[\bar{R}_{\Sigma}]\;\to\; 0.
\end{equation}
\end{theorem}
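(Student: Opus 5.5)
The plan mirrors the proof of Theorem~\ref{thm:ERsum_ULA}, transported to the joint space-Doppler domain. The first step is to build the balls-and-bins model for STAB. The spatial support $\mathcal{U}$ has width $R/H=M^{-r}$ and is cut into $M^{1-r}$ resolution bins of width $1/M$. The Doppler support $\mathcal{W}=[-1/2,1/2]$, resolved by a temporal aperture of $L=M^q$, is cut into $M^q$ bins of width $1/L$. The pairs $(u_{x,k},\omega_k)$ are i.i.d. uniform on $\mathcal{U}\times\mathcal{W}$, so the $K=M^p$ users are thrown i.i.d. uniformly into
\begin{align}
B=M^{1-r+q}
\end{align}
space-Doppler bins. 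Lemma~\ref{lem:max_load} then applies with $1-r$ replaced by $1-r+q$. Its thresholds become: $p<1-r+q$, i.e.\ $q>\delta$, gives $n_{\max}=O(1)$; $q=\delta$ gives $n_{\max}=\Theta(\log M/\log\log M)$; $q<\delta$ gives $n_{\max}=\Theta(M^{\delta-q})$, all w.h.p.

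For regime (i), I would not use the cluster bound at all. I would use the trivial bound $\lambda_1(\bar{\mathbf{G}})\le \bar{\mathbf{G}}_{kk}=1$ (unit-norm space-time steering vectors) inside \eqref{eq:sumrate_lambda_min_ub} applied to \eqref{eq:stab_sumrate}. This gives
\begin{align}
\bar R_\Sigma\le M^{p-q}\log_2(1+\rho M^{1+q}),
\end{align}
and $\log(1+\rho M^{1+q})=O(\log M)$ because $q$ is a fixed exponent. That yields \eqref{eq:ER_tilde_sparse} deterministically, for both $q>\delta$ and $q=\delta$. I would add a remark that in the critical case a sharper, Theorem~\ref{thm:ERsum_ULA}(ii)-type refinement could be extracted, but it is not needed for the stated bound.

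For regime (ii), I would condition on the high-probability event $\mathcal{E}=\{n_{\max}\ge cM^{\delta-q}\}$ and split
\begin{align}
\mathbb{E}[\bar R_\Sigma]\le \mathbb{E}[\bar R_\Sigma\mid\mathcal{E}]+\mathbb{P}(\mathcal{E}^c)\,M^{p-q}\log_2(1+\rho M^{1+q}).
\end{align}
On $\mathcal{E}$, Lemma~\ref{lem:zf_sumrate_stab} applies with $J_2(n)\asymp\sqrt{2n}$, since $\binom{j+1}{2}\le n-1$. The argument of the logarithm is then
\begin{align}
\rho M^{1+q}\,n\,e^{8\pi}\exp\!\big(-2J_2\log(J_2/4\pi e)\big).
\end{align}
With $n=\Theta(M^{\delta-q})$ and $\delta-q>0$, we have $J_2\gtrsim M^{(\delta-q)/2}$, so the exponential decay $\exp(-\Omega(M^{(\delta-q)/2}\log M))$ beats every polynomial factor. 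Using $\log_2(1+x)\le x/\ln2$, the first term is therefore at most $M^{p-q}\cdot\mathrm{poly}(M)\cdot\exp(-\Omega(M^{(\delta-q)/2}\log M))\to0$. One point needs care: Lemma~\ref{lem:zf_sumrate_stab} is stated for $n_{\max}=n$ exactly. Since the bound is monotone decreasing in $n$ for large $n$, and by interlacing any sub-cluster of size $\lceil cM^{\delta-q}\rceil$ suffices, conditioning on $n_{\max}\ge n$ is enough.

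The main obstacle is the second term. The trivial bound is only $M^{p-q}\log M$, so I need $\mathbb{P}(\mathcal{E}^c)=o(M^{-(p-q)}/\log M)$; "w.h.p." in Lemma~\ref{lem:max_load} is not quantitative enough. I would therefore re-derive the lower tail of the maximum load with an explicit rate. Partition the bins into groups and use independence, or the Poisson approximation with negative association of bin loads: $\mathbb{P}(n_{\max}<cK/B)\le\prod_b\mathbb{P}(\text{load}_b<cK/B)$ up to a constant factor. Each factor is bounded away from $1$ by a Chernoff bound when $K/B\to\infty$, which gives a probability of order $e^{-\Omega(B)}=e^{-\Omega(M^{1-r+q})}$. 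This decays faster than any polynomial and closes the argument. Alternatively, a single-bin Chernoff bound gives $\mathbb{P}(\text{load}_1<\tfrac12K/B)\le e^{-K/(8B)}=e^{-\Omega(M^{\delta-q})}$, which is already super-polynomially small, so even this weaker route suffices.
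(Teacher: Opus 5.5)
Your proof is correct, and for part (i) it takes a different route from the paper. The paper's proof of Theorem~\ref{thm:ERsum_STAB} reruns Appendix~\ref{ap:D} with $B=M^{1-r}$ replaced by $BL=M^{1-r+q}$ and with Lemma~\ref{lem:zf_sumrate_stab} in place of Lemma~\ref{lem:zf_sumrate}.

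\textbf{Part (i).} The paper uses the max-load template: Lemma~\ref{lem:max_load}, a high-probability event on $n_{\max}$, and a law-of-total-expectation split with the deterministic bound on the complement. You instead note that $\lambda_1(\bar{\mathbf G})\le[\bar{\mathbf G}]_{kk}=1$ gives $\bar R_\Sigma\le M^{p-q}\log_2(1+\rho M^{1+q})$ for every user configuration. This is cleaner, and it shows that part (i) carries no geometric content; it is the same bound the paper effectively falls back on in its sparse case. The paper's template buys uniformity with Theorem~\ref{thm:ERsum_ULA}, where for $D=1$ the critical-regime cluster bound decays polynomially and gives a sharper rate. For $D=2$ that bound is only $\exp(-\Theta(\sqrt{\log B\log\log B}))=B^{-o(1)}$, so your shortcut loses nothing. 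This is also why the theorem merges the sparse and critical cases.

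\textbf{Part (ii).} You follow the paper's dense-regime argument, and you correctly invoke interlacing on a sub-cluster to apply Lemma~\ref{lem:zf_sumrate_stab}. However, the obstacle you identify does not exist. By the pigeonhole principle, $n_{\max}\ge\lceil K/(BL)\rceil\ge M^{\delta-q}$ holds for every configuration. Hence your event $\mathcal E$ (with $c\le1$) is sure, $\mathbb P(\mathcal E^c)=0$, and the bound tends to zero uniformly over user placements. This is exactly how Appendix~\ref{ap:D} handles its dense case. Your single-bin Chernoff fallback is valid, since $n_{\max}$ dominates the load of any fixed bin, but no tail estimate is needed.
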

\begin{proof}
The proof follows the same argument as that of Theorem~\ref{thm:ERsum_ULA} in Appendix~\ref{ap:D}, after replacing the spatial bin count $B$ with $BL$ and invoking Lemma~\ref{lem:zf_sumrate_stab} in place of Lemma~\ref{lem:zf_sumrate}.
\end{proof}

In addition to the softened eigenvalue decay mentioned in Lemma~\ref{lem:lambda_min_ub}, the gain of STAB admits a clean interpretation in the balls-and-bins framework. Since users are now distinguished by their joint spatial and Doppler signatures, the effective number of bins increases from $B$ to $BL$, which reduces the maximum load $n_{\max}$ for a given $K$. Since $n_{\max}$ governs the minimum eigenvalue of the Gram matrix, this directly translates into improved conditioning of the Gram matrix and a higher achievable sum rate.

In Theorem~\ref{thm:ERsum_STAB}, the parameter $\delta = p + r - 1$ quantifies the excess user load beyond what purely spatial ZF can handle. When $q \ge \delta$, the temporal expansion provided by STAB grows faster than this excess, and the sum rate recovers to a non-vanishing scaling. Conversely, when $q < \delta$, the Doppler dimension is insufficient to compensate for spatial crowding, and the sum rate still collapses.

Note also that $q \le p$ must hold in practice, since the pre-log penalty $1/L$ from symbol repetition eventually dominates if $L$ grows too aggressively. This highlights the separability-efficiency tradeoff inherent in STAB. Since $L=M^{q}$, the tradeoff is governed by the exponent $q$, and the design question reduces to how large $q$ must be to restore the channel conditioning without incurring an unnecessary rate loss.
{
However, this question cannot be settled by the upper bounds obtained so far. Because $R_\Sigma\ge0$, they identify the regimes in which the sum rate collapses, but they do not certify that any choice of $q$ attains the corresponding scaling.
The following corollary establishes achievability.
\begin{corollary}[Achievability in the ULA]
\label{cor:separation}
Let $p>1-r$ and set $q=\delta+\varepsilon$ for any $\varepsilon\in(0,1-r)$. Then,
\begin{equation}
\bar R_\Sigma
=
\Omega\!\left(M^{1-r-\varepsilon}\log M\right)
\quad \text{w.h.p.}
\label{eq:separation_lb}
\end{equation}
Furthermore,
\begin{equation}
\mathbb{E}[\bar R_\Sigma]
=
\Theta\!\left(M^{1-r-\varepsilon}\log M\right).
\label{eq:separation_theta}
\end{equation}
\end{corollary}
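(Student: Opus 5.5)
The plan is to prove the lower bound \eqref{eq:separation_lb} by showing that, w.h.p., the space-time Gram matrix $\bar{\mathbf{G}}$ has minimum eigenvalue bounded below by a constant. Then $\mathrm{tr}(\bar{\mathbf{G}}^{-1})\le K/\lambda_1(\bar{\mathbf{G}})\lesssim M^{p}$. Substituting into \eqref{eq:stab_sumrate} gives, per user, $\log_2(1+\rho M^{1+q}/\mathrm{tr}(\bar{\mathbf{G}}^{-1}))\ge \log_2(1+c\rho M^{1+q-p})$. Since $1+q-p=r+\varepsilon>0$, this is $\Theta(\log M)$. Multiplying by $K/L=M^{p-q}=M^{1-r-\varepsilon}$ yields $\bar R_\Sigma=\Omega(M^{1-r-\varepsilon}\log M)$.

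The key step is a well-separatedness argument in the two-dimensional space-Doppler torus. The spatial frequencies are uniform on an interval of length $M^{-r}$, and the Doppler frequencies are uniform on $[-1/2,1/2]$. Bin space at width $c_1/M$ and Doppler at width $c_1/L$. This gives $BL=M^{1-r+q}/c_1^2$ bins, so the expected load is $K/(BL)=c_1^2M^{-\varepsilon}\to0$. I would not use the max-load statement. Instead I would directly bound the probability that some pair of users is ``close'', meaning $|\Delta u|<C/M$ and $|\Delta\omega|<C/L$. By a union bound over pairs, this probability is at most $K^2\cdot\frac{2C/M}{M^{-r}}\cdot\frac{2C}{L}\asymp M^{2p-1+r-q}=M^{p-\varepsilon}$. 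That does not vanish, so a pure pairwise separation argument fails.

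The plan is therefore to use a block/cluster argument. W.h.p., the maximum load in the enlarged space-Doppler bins is $O(1)$. This follows from Lemma~\ref{lem:max_load} with $B$ replaced by $BL$ and $K/(BL)=M^{-\varepsilon}$: the load exceeds $m$ with probability at most $BL\,(K/BL)^m\to0$ for fixed $m>(1-r+q)/\varepsilon$. Conditional on bounded cluster sizes, I would then lower bound $\lambda_1(\bar{\mathbf{G}})$ using the two-dimensional Vandermonde smallest-singular-value results for clustered nodes \cite{batenkov:siam:20, li:stable:2021}. The Gram entries factor as $\mathbf{a}_i^{\mathsf H}\mathbf{a}_k\cdot\mathbf{b}_i^{\mathsf H}\mathbf{b}_k$, a product of Dirichlet kernels. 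Those results give $\lambda_1\gtrsim\prod(\text{intra-cluster separation})^{2(n-1)}$ in resolution units, provided clusters are separated by several resolution cells. Two issues arise here: nearby users within a bounded cluster may be arbitrarily close, and off-cluster leakage decays only like $1/(M|\Delta u|)$. To handle both, I would first condition on a minimum separation $M^{-\gamma}$ in normalized units, which holds w.h.p. by a union bound with small $\gamma$. This costs at most a polynomial factor $M^{-O(\gamma)}$ in $\lambda_1$, which only changes the $\log$ argument by a constant factor, since $\log(M^{r+\varepsilon-O(\gamma)})=\Theta(\log M)$. Leakage is handled by Gershgorin applied blockwise. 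The sum of cross-cluster Dirichlet magnitudes is small because the expected number of users within $k$ resolution cells of a given user is $\approx k^2M^{-\varepsilon}$. So $\sum_k k^2M^{-\varepsilon}/k^2$ remains controllable after a logarithmic cutoff. I expect this leakage/Gershgorin bookkeeping, with heavy-tailed Dirichlet decay in two dimensions, to be the main obstacle. I might instead replace it with a smoothed large-sieve (Beurling–Selberg majorant) inequality, which gives $\lambda_1\gtrsim 1-O(\text{local density})$ directly.

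For \eqref{eq:separation_theta}, the upper bound follows from Theorem~\ref{thm:ERsum_STAB}(i), since $q=\delta+\varepsilon\ge\delta$ gives $\mathbb{E}[\bar R_\Sigma]\lesssim M^{p-q}\log M=M^{1-r-\varepsilon}\log M$. For the lower bound, $\bar R_\Sigma\ge0$, so $\mathbb{E}[\bar R_\Sigma]\ge \mathbb{P}(\mathcal{E})\cdot\Omega(M^{1-r-\varepsilon}\log M)$, where $\mathcal{E}$ is the w.h.p. event above with $\mathbb{P}(\mathcal{E})\to1$. Note also that $\varepsilon<1-r$ guarantees $q<p$, since $q=\delta+\varepsilon<p+r-1+1-r=p$. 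Hence the pre-log $M^{p-q}$ grows and the bound is nontrivial.
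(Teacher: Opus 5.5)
\textbf{There is a genuine gap.} The reduction $\operatorname{tr}(\bar{\mathbf G}^{-1})\le K/\lambda_1(\bar{\mathbf G})$ is too lossy, and both versions of your eigenvalue claim are false.

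By your own pair count, the expected number of pairs at normalized space-Doppler distance below $x$ is $\asymp x^2M^{p-\varepsilon}$. So w.h.p.\ some pair sits at $x=M^{-(p-\varepsilon)/2+o(1)}$. A minimum separation $M^{-\gamma}$ therefore holds w.h.p.\ only for $\gamma>(p-\varepsilon)/2$, not for small $\gamma$. For that pair, $|\bar{\mathbf G}_{ij}|$ is a product of two Dirichlet kernels with $1-|\bar{\mathbf G}_{ij}|\lesssim x^2$. Interlacing on its $2\times 2$ principal submatrix then gives $\lambda_1(\bar{\mathbf G})\le M^{-(p-\varepsilon)+o(1)}$ w.h.p.

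Your route therefore yields at best $\overline{\mathrm{SINR}}\ge\rho M^{1+q}\lambda_1(\bar{\mathbf G})/K$, whose right-hand side is at most $\rho M^{r+2\varepsilon-p+o(1)}$. This vanishes whenever $p>r+2\varepsilon$, e.g.\ $r=0.3$, $p=0.9$, $\varepsilon=0.1$. In particular it vanishes for every $r<1/2$ once $\varepsilon$ is small, since then $p>1-r>r$. That is exactly the limit in which the corollary is supposed to match Theorem~\ref{thm:ERsum_STAB}.

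Because this is the true size of $\lambda_1(\bar{\mathbf G})$, no sharper eigenvalue technique can rescue the route: not a large-sieve/majorant bound, not clustered-Vandermonde bounds, not blockwise Gershgorin. The last one even fails on its own terms. Cross-cluster Dirichlet leakage is of order $M^{-\varepsilon}$ up to logarithms, far above the close-pair block's eigenvalue of about $M^{-(p-\varepsilon)}$ whenever $p>2\varepsilon$. Fixed bins also miss close neighbors that straddle bin boundaries.

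The missing idea is that the common ZF SINR $\rho ML/\operatorname{tr}(\bar{\mathbf G}^{-1})$ depends on an \emph{average} over users, not on the worst eigenvalue. Close pairs are rare, and although each one is expensive, their expected total contribution is of lower order than $K$. The paper makes this rigorous in four steps.
\begin{enumerate}
\item It uses the variational bound $\operatorname{tr}(\bar{\mathbf G}^{-1})\le ML\|\bar{\mathbf F}_{\mathrm{ri}}\|_F^2$, valid for any right inverse with $\bar{\mathbf H}^{\mathsf H}\bar{\mathbf F}_{\mathrm{ri}}=\mathbf I_K$.
\item It forms graph clusters at radius $\Delta_{\mathrm c}=M^{\varepsilon/4}$; tree counting shows every cluster has at most $n_0$ users w.h.p., and this also handles neighbors across bin boundaries.
\item It builds tapered localized Lagrange interpolants per cluster. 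Their fast decay makes the cross-cluster residual satisfy $\|\boldsymbol\Phi\|_2=O(M^{-2})$, which is removed by $(\mathbf I_K+\boldsymbol\Phi)^{-1}$. This step needs only $\|\boldsymbol\Phi\|_2<1$, not leakage below $\lambda_1$.
\item It bounds the \emph{expected} beam energy by $M^{p+o(1)}$, since each extra cluster member costs a factor $\mathcal I_M\lesssim M^{-\varepsilon/2+o(1)}$. Markov's inequality then gives $\overline{\mathrm{SINR}}\ge\rho M^{r+\varepsilon/2-o(1)}$ w.h.p.\ for all admissible parameters.
\end{enumerate}

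Your other two steps are correct and match the paper. The upper bound follows from Theorem~\ref{thm:ERsum_STAB}(i). The expectation lower bound follows from the w.h.p.\ statement together with $\bar R_\Sigma\ge0$.
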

\begin{proof}
See Appendix~\ref{ap:E}.
\end{proof}
The condition $0<\varepsilon<1-r$ is equivalent to $\delta<q<p$, and since $p-q=1-r-\varepsilon$, the lower bound \eqref{eq:separation_lb} matches the upper bound in Theorem~\ref{thm:ERsum_STAB}, with the two coinciding as $\varepsilon\to0$. STAB thus achieves the optimal sum rate scaling order in the regime where spatial ZF collapses, and the design question raised above is answered by expanding just beyond the separability threshold $\delta$.
}

\subsection{Asymptotic Analysis: UPA Case}
\label{subsec:asymptotic_upa}
We now extend the analysis to the UPA model. For analytical clarity, we assume $M_x = M_y = M^{1/2}$ so that the UPA provides 2D spatial separability over $(u_{x,k}, u_{y,k})$ for user $k$. To ensure a nontrivial 2D asymptotic regime for the UPA, we focus on the case {$0<r<1/2$}, under which the number of resolution bins grows along both spatial axes. Consequently, the analysis of ZF precoding using only the spatial dimensions in the UPA follows the same 2D Vandermonde structure used for STAB under a ULA, but without the $1/L$ pre-log penalty. Adding STAB to the UPA further introduces a Doppler axis, resulting in a 3D balls-and-bins problem over $(u_{x,k}, u_{y,k}, \omega_k)$.

\begin{figure}[!t]
    \centerline{\resizebox{0.9\columnwidth}{!}{\includegraphics{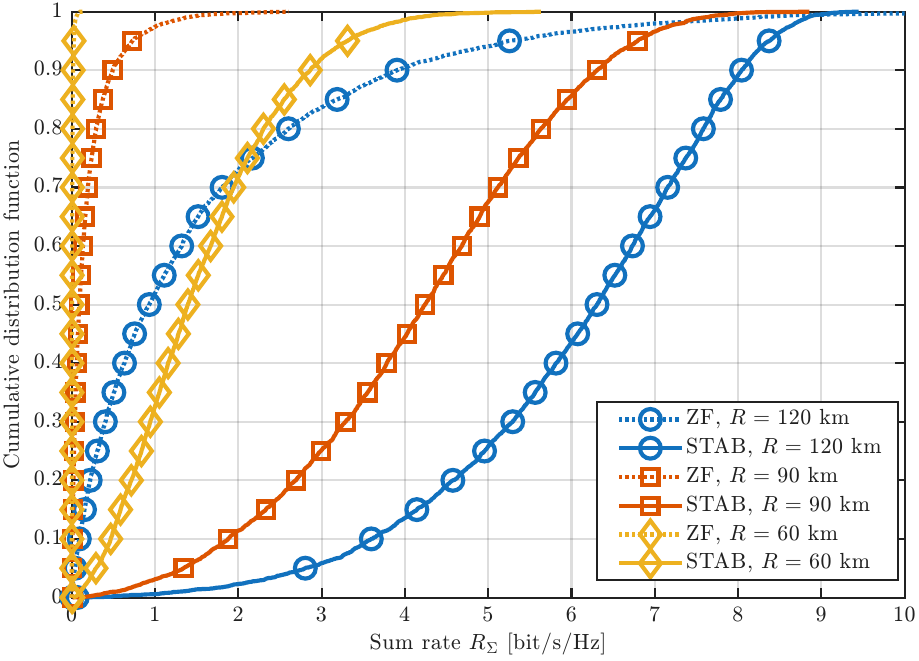}}}
    \caption{Empirical CDFs of the spatial ZF and STAB sum rates under
    the analytical model of Sections~\ref{subsec:system_setup}
    and~\ref{sec:asymptotic}, with $M_x=M_y=16$, $K=16$, $L=3$,
    $H=600$ km, $P=28$ dBm, and different $R$, where
    $|\beta_k|^2$ follows the Friis model~\protect\cite{friis:note:1964}
    with $f_c=2$ GHz.}
    \label{fig:cdf}
\end{figure}

Specifically, under the small-angle approximation, the spatial frequencies $(u_{x,k}, u_{y,k})$ are i.i.d. uniform over the 2D support $\CMcal{U} = \CMcal{U}_x \times \CMcal{U}_y$, where $\CMcal{U}_x = \CMcal{U}_y = \left[-\frac{R}{2H}, \frac{R}{2H}\right]$.
Partitioning $\CMcal{U}$ into bins of size $(1/M_x) \times (1/M_y)$ yields a total of
\begin{align}
B = \frac{|\CMcal{U}_x|}{1/M_x} \cdot \frac{|\CMcal{U}_y|}{1/M_y} = M \left(\frac{R}{H}\right)^2 = M^{1-2r}
\label{eq:upa_bin_count}
\end{align}
spatial bins, so the system reduces to a balls-and-bins problem with $K = M^p$ balls and $B = M^{1-2r}$ bins.  
Since the 2D single cluster analysis is identical to that of STAB with $(u_{x,k},\omega_k)$ replaced by $(u_{x,k},u_{y,k})$, the ZF sum rate upper bound for the UPA is
\begin{align}
R_\Sigma \lesssim M^p \log_2 \left( 1 + \rho M \,{n\,e^{8\pi}\left(\tfrac{4\pi e}{J_2(n)}\right)^{2J_2(n)}} \right),
\label{eq:zf_sumrate_upa}
\end{align}
where the pre-log term has no $1/L$ penalty since no temporal expansion is used.

Based on \eqref{eq:zf_sumrate_upa}, we establish the asymptotic upper scaling law for the average ZF sum rate under the UPA as follows.
\begin{theorem}[Upper scaling law of the average sum rate $R_{\Sigma}$ for UPA]
\label{thm:ERsum_UPA}
Consider the UPA baseline without STAB.

\textbf{\normalfont{(i) Sparse/critical regime: $p + 2r - 1 \le 0$.}}
\begin{align}
\mathbb{E}[R_{\Sigma}] \;\lesssim\; M^{p} \log M.
\end{align}

\textbf{\normalfont{(ii) Dense regime: $p + 2r - 1 > 0$.}}
\begin{align}
\mathbb{E}[R_{\Sigma}] \;\to\; 0.
\end{align}
\end{theorem}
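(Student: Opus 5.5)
The proof mirrors that of Theorem~\ref{thm:ERsum_ULA} in Appendix~\ref{ap:D}, with three changes: the bin count becomes $B=M^{1-2r}$ from \eqref{eq:upa_bin_count}, the spatial dimension becomes $D=2$, and the per-realization bound is the UPA version \eqref{eq:zf_sumrate_upa}. The starting point is Lemma~\ref{lem:max_load}, which applies directly because it only requires $K=M^p$ balls thrown i.i.d. uniformly into $B$ bins. Writing $B=M^{1-r'}$ with $r'=2r\in(0,1)$, the exponent threshold $p$ versus $1-r'$ becomes $p+2r-1$ compared with $0$. I would then split the expectation according to whether the event $\mathcal{E}$, on which $n_{\max}$ has its w.h.p. order, holds or fails.

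For the sparse/critical regime $p+2r-1\le0$, a crude bound is enough. Since $\lambda_1(\mathbf{G})\le \operatorname{tr}(\mathbf{G})/K=1$, inequality \eqref{eq:sumrate_lambda_min_ub} gives $R_\Sigma\le M^p\log_2(1+\rho M)$ for every realization. Taking expectations yields $\mathbb{E}[R_\Sigma]\lesssim M^p\log M$. Unlike the ULA critical case, no refinement is needed: the statement merges the sparse and critical cases with this same bound.

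For the dense regime $p+2r-1>0$, on $\mathcal{E}$ we have $n=n_{\max}=\Theta(M^{p+2r-1})\to\infty$. From \eqref{eq:taylor_order} with $D=2$, $\binom{j+1}{2}\le n-1$ gives $J_2(n)=\Theta(\sqrt{n})$, so $J_2(n)=\Theta(M^{(p+2r-1)/2})$ grows polynomially. The eigenvalue bound $n e^{8\pi}(4\pi e/J_2)^{2J_2}$ therefore behaves like $\exp(-c\,J_2\log J_2)$ with $J_2\log J_2=\Omega(M^{\epsilon}\log M)$ for some $\epsilon>0$. This decay beats every polynomial factor: $\rho M\cdot n\cdot(\cdots)\to0$ faster than any power of $M$, so $M^p\log_2(1+x)\le M^p x/\ln 2\to0$ on $\mathcal{E}$. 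Off $\mathcal{E}$, I would again use the crude bound $M^p\log_2(1+\rho M)$.

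The main obstacle is controlling the complement term, $\mathbb{P}(\mathcal{E}^c)\cdot M^p\log(1+\rho M)$, which requires $\mathbb{P}(\mathcal{E}^c)=o(M^{-p}/\log M)$. I would not rely on the generic w.h.p. statement of Lemma~\ref{lem:max_load}. Only a one-sided lower bound on $n_{\max}$ is needed, and I would prove it directly. Fix a single bin; its load is $\mathrm{Binomial}(K,1/B)$ with mean $M^{p+2r-1}$. A Chernoff bound then shows that its load falls below half the mean with probability $\exp(-\Omega(M^{p+2r-1}))$, which is superpolynomially small. It suffices to consider that one bin, since the bound in Lemma~\ref{lem:zf_sumrate} (its UPA analogue) together with interlacing applies to any cluster, not only the maximal one. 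This makes the complement contribution vanish, so $\mathbb{E}[R_\Sigma]\to0$.
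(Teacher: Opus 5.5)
Your proposal is correct, and its skeleton matches what the paper's one-line reduction to Appendix~\ref{ap:D} amounts to: bin count $B=M^{1-2r}$, Lemma~\ref{lem:lambda_min_ub} at $D=2$, Cauchy interlacing on the users of one bin, and the crude bound $R_\Sigma\le M^p\log_2(1+\rho M)$. For the sparse/critical regime, your realization-wise bound $\lambda_1(\mathbf G)\le\operatorname{tr}(\mathbf G)/K=1$ is a cleaner version of the paper's conditioning on the max-load event, and it ends at the same $M^p\log M$. There is nothing to refine there: for $D=2$ a critical load $n\sim\log B/\log\log B$ only yields $\exp(-\Theta(\sqrt n\log n))=B^{-o(1)}$.

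The real difference is the dense regime. The paper needs no probability there. By pigeonhole, every realization has a bin containing at least $\lceil K/B\rceil=\lceil M^{p+2r-1}\rceil$ users; this is the lower-bound half of the dense case in Appendix~\ref{ap:A}. Interlacing and the $D=2$ eigenvalue bound then give $R_\Sigma\to0$ surely, and no complement event arises.

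The step you single out as the main obstacle is showing $\mathbb P(\mathcal E^c)=o(M^{-p}/\log M)$ via a lower-tail Chernoff bound on a fixed bin. That step is valid, but it is avoidable. It also relies on independence and uniformity of the placement, which pigeonhole does not need. In exchange, your route gives more than the theorem needs: combined with a union bound, it shows that w.h.p.\ every bin, not just the fullest, is overloaded. That would be the relevant input if one wanted to control many small eigenvalues of $\mathbf G$ rather than only $\lambda_1$.
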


\begin{proof}
The proof follows the balls-and-bins and cluster-conditioning arguments used in Theorem~\ref{thm:ERsum_STAB}, substituting the space-Doppler bin count by the 2D spatial bin count $B = M^{1-2r}$, omitting the $1/L$ pre-log penalty.
\end{proof}

For STAB under UPA, the Doppler axis extends the 2D spatial domain to a 3D space-Doppler domain $(u_{x,k}, u_{y,k}, \omega_k)$ over the support $\CMcal{U}_x \times \CMcal{U}_y \times \CMcal{W}$. Partitioning the joint support into bins of size $(1/M_x) \times (1/M_y) \times (1/L)$ yields $M^{1-2r+q}$ space-Doppler bins. Following the same argument as in the previous analysis, an upper bound on the average sum rate of STAB in UPA is given by
\begin{align} \label{eq:zf_sumrate_upa_stab}
\bar R_\Sigma \lesssim M^{p-q} \log_2\left( 1+\rho M^{1+q} \,{n\,e^{12\pi}\left(\tfrac{6\pi e}{J_3(n)}\right)^{2J_3(n)}} \right),
\end{align}
where $M^{p-q}$ captures the $1/L$ pre-log penalty incurred by using $L=M^q$ time resources. Based on \eqref{eq:zf_sumrate_upa_stab}, the asymptotic upper bound on the STAB average sum rate for the UPA setting scales as follows.

\begin{theorem}[Upper scaling law of the average sum rate $\bar{R}_{\Sigma}$ for UPA]
\label{thm:ERsum_UPA_STAB}
Consider the STAB with $L=M^q$, and let $\delta_{\mathrm{UPA}} = p+2r-1$ be the scaling threshold in UPA. Then, the scaling law of the STAB average sum rate is upper-bounded as

\textbf{\normalfont{(i) Sparse/critical regime:} $q \ge \delta_{\mathrm{UPA}}$}
\begin{align}
\mathbb E[\bar R_{\Sigma}]
\;\lesssim\;
M^{p-q}\log M.
\label{eq:ER_UPA_STAB_sparsecritical}
\end{align}

\textbf{\normalfont{(ii) Dense regime:} $q < \delta_{\mathrm{UPA}}$}
\begin{align}
\mathbb E[\bar R_{\Sigma}]
\;\to\; 0.
\label{eq:ER_UPA_STAB_dense}
\end{align}
\end{theorem}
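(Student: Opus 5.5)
The plan is to repeat the ULA/STAB argument of Theorem~\ref{thm:ERsum_STAB}, now on the 3D space-Doppler grid. The bin count is $B_3 = M^{1-2r+q}$ and the pre-log is $M^{p-q}$ instead of $M^p$. The first step is to apply Lemma~\ref{lem:max_load}, which only depends on the exponents of balls and bins, with $K=M^p$ balls and $B_3$ bins. Comparing $p$ with $1-2r+q$, the threshold $p=1-2r+q$ is exactly $q=\delta_{\mathrm{UPA}}$. So, w.h.p., $n_{\max}=O(1)$ when $q>\delta_{\mathrm{UPA}}$, $n_{\max}=\Theta(\log M/\log\log M)$ when $q=\delta_{\mathrm{UPA}}$, and $n_{\max}=\Theta(M^{\delta_{\mathrm{UPA}}-q})$ when $q<\delta_{\mathrm{UPA}}$. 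I would also note that the bins are well defined: $0<r<1/2$ gives $M_x R/H = M^{1/2-r}\to\infty$, and likewise in $y$. The Doppler axis has $L=M^q$ bins since $|\mathcal W|=1$.

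For case (i), I would not use the cluster structure at all. The bound $\mathrm{tr}(\bar{\mathbf G}^{-1})\ge K/\lambda_K(\bar{\mathbf G})\ge 1/\lambda_K\ge 1/K$ gives $\overline{\mathrm{SINR}}_k\le \rho M^{1+q}K$, where $\lambda_K(\bar{\mathbf G})\le \mathrm{tr}(\bar{\mathbf G})=K$ because the diagonal entries equal $1$. Hence $\bar R_\Sigma\le M^{p-q}\log_2(1+\rho M^{1+q+p}) \lesssim M^{p-q}\log M$ deterministically, and the same holds for the expectation. This gives \eqref{eq:ER_UPA_STAB_sparsecritical} regardless of $n_{\max}$.

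For case (ii), I would condition on the w.h.p. event $\mathcal E=\{n_{\max}\ge cM^{\delta_{\mathrm{UPA}}-q}\}$ and use \eqref{eq:zf_sumrate_upa_stab}. This is Lemma~\ref{lem:lambda_min_ub} with $D=3$ combined with interlacing: the most crowded 3D bin is a principal submatrix of $\bar{\mathbf G}$, and $\bar{\mathbf a}_k=\mathbf b_k\otimes\mathbf a(u_y)\otimes\mathbf a(u_x)$ is a 3D Vandermonde vector with per-axis spread at most one resolution cell. From \eqref{eq:taylor_order}, $J_3(n)\asymp (6n)^{1/3}$ grows polynomially in $M$. Then $\log\big(M^{1+q}n\,e^{12\pi}(6\pi e/J_3)^{2J_3}\big) = O(\log M) - 2J_3\log(J_3/6\pi e)\to-\infty$, because $J_3\log J_3$ grows like a positive power of $M$ times $\log M$. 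So the argument of the logarithm is $\exp(-\Omega(M^{c'}))$, and $M^{p-q}\log_2(1+x)\le M^{p-q}x/\ln 2\to 0$. On $\mathcal E^c$ I would use the deterministic bound from case (i), giving a contribution of at most $\mathbb P(\mathcal E^c)\,M^{p-q}\log M$.

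The main obstacle is this last term. It needs $\mathbb P(\mathcal E^c)=o(1/(M^{p-q}\log M))$, not just $o(1)$. I would handle it with a Chernoff/Poisson lower tail for the maximum load: the loads of $B_3$ bins are negatively associated, and each falls below half its mean $K/B_3$ with probability $e^{-\Omega(M^{\delta_{\mathrm{UPA}}-q})}$. The probability that all bins do so is therefore superpolynomially small, which suffices. The claim \eqref{eq:ER_UPA_STAB_dense} then follows.
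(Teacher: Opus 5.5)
Your proposal is correct. Part (ii) follows the paper's route, while part (i) takes a different and more elementary one.

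\textbf{Part (ii).} The paper, via the dense case of Appendix~\ref{ap:D}, lower-bounds the maximum load over the $M^{1-2r+q}$ space-Doppler bins. It then restricts to the principal submatrix of the most loaded bin by interlacing and applies Lemma~\ref{lem:lambda_min_ub} at $D=3$, exactly as you do. The one difference is how the load bound is obtained. The paper uses the pigeonhole principle, $n_{\max}\ge\lceil K/B_3\rceil=\lceil M^{\delta_{\mathrm{UPA}}-q}\rceil$, which holds for every user configuration. With $c=1$ your event $\mathcal E$ is therefore the sure event, $\mathbb P(\mathcal E^c)=0$, and the step you flag as the main obstacle disappears. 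Your negative-association/Chernoff argument is valid but redundant: the event that every bin lies below half its mean is empty, because the loads sum to $K$.

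\textbf{Part (i).} This is where you genuinely depart. The paper again conditions on the max-load event of Lemma~\ref{lem:max_load}, now in its sparse or critical form. It applies the cluster bound on that event and uses the crude bound from $\mathrm{tr}(\bar{\mathbf G}^{-1})\ge K$ only on the complement. You instead use the crude ceiling $\overline{\mathrm{SINR}}_k\le\rho M^{1+p+q}$ for every realization.

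\textbf{Comparison.} Your route is shorter and makes (i) a deterministic statement. It needs neither Lemma~\ref{lem:max_load} nor the uniformity and small-angle assumptions. It also makes explicit that (i) carries no crowding information: it is just the pre-log $K/L$ times a $\mathrm{poly}(M)$ SINR ceiling inside the logarithm. Nothing is lost by this shortcut. In the $D=3$ critical case, $n_{\max}=\Theta(\log M/\log\log M)$ gives $J_3(n_{\max})\asymp(\log M/\log\log M)^{1/3}$. The cluster bound then contributes only an $M^{-o(1)}$ factor inside the logarithm. So the conditioning cannot beat $M^{p-q}\log M$ here, unlike the $D=1$ sharpening in Theorem~\ref{thm:ERsum_ULA}(ii). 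What the paper's route buys is uniformity with that ULA framework, not a stronger bound. As a minor point, the AM--HM inequality gives $\mathrm{tr}(\bar{\mathbf G}^{-1})\ge K^2/\mathrm{tr}(\bar{\mathbf G})=K$, a tighter ceiling than your $1/K$, though either suffices.
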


\begin{proof}
The proof follows the same balls-and-bins conditioning argument as in the previous STAB analysis, after replacing the spatial bin count $M^{1-2r}$ with the space-Doppler bin count $M^{1-2r+q}$ and applying the 3D single cluster eigenvalue bound underlying \eqref{eq:zf_sumrate_upa_stab}.
\end{proof}

\begin{figure}[!t]
    \centerline{\resizebox{0.9\columnwidth}{!}{\includegraphics{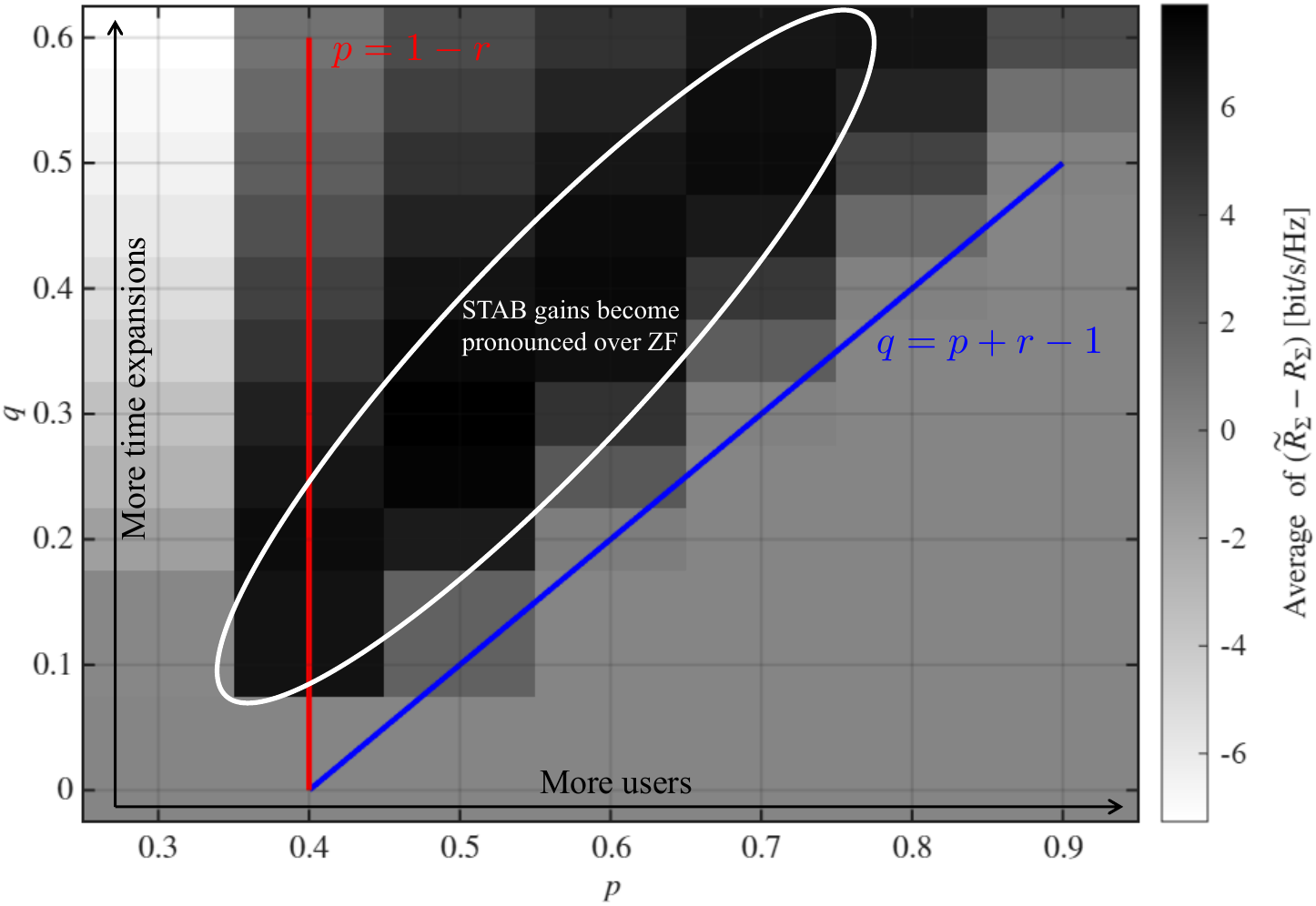}}}
    \caption{Heat map of $\mathbb{E}[\bar{R}_{\Sigma}-R_{\Sigma}]$ for ULA over the $(p,q)$-plane with $M=256$, $r=0.60$, $K=M^p$, and $L=M^q$, under the same parameters of Fig.~\ref{fig:cdf}.}
    \label{fig:gain_map}
\end{figure}

The structure of Theorem~\ref{thm:ERsum_UPA_STAB} mirrors that of Theorem~\ref{thm:ERsum_STAB}, with the excess load parameter now given by $\delta_\mathrm{UPA} = p + 2r - 1$ instead of $\delta = p + r - 1$. The factor $2r$ in the UPA case arises because $r$ describes the scaling of the linear spatial extent $R/H \sim M^{-r}$. For the 2D spatial support, the relevant measure becomes $(R/H)^2 \sim M^{-2r}$, leading naturally to the modified crowding law and the threshold $p+2r-1$.{
Theorem~\ref{thm:ERsum_UPA_STAB} is again an upper bound, and the corresponding achievability result is as follows.
\begin{corollary}[Achievability in the UPA]
\label{cor:separation_upa}
Let $r < \frac{1}{2}$, $p > 1-2r$, and set $q = \delta_{\mathrm{UPA}} + \varepsilon$ for any $\varepsilon \in (0, 1-2r)$. Then,
\begin{equation}
\mathbb{E}[\bar R_\Sigma] = \Theta\!\left(M^{1-2r-\varepsilon}\log M\right).
\label{eq:separation_theta_upa}
\end{equation}
\end{corollary}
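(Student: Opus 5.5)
The upper bound follows directly from Theorem~\ref{thm:ERsum_UPA_STAB}(i). The condition $q=\delta_{\mathrm{UPA}}+\varepsilon>\delta_{\mathrm{UPA}}$ places us in the sparse/critical regime, so $\mathbb{E}[\bar R_\Sigma]\lesssim M^{p-q}\log M$. Since $p-q=1-2r-\varepsilon$, this is the claimed upper scaling. The constraint $\varepsilon<1-2r$ is equivalent to $q<p$, so $K/L=M^{1-2r-\varepsilon}\to\infty$ and the pre-log grows. All the work is therefore in the matching lower bound. I would prove it exactly as in Corollary~\ref{cor:separation} (Appendix~\ref{ap:E}), replacing the 2D space-Doppler grid by the 3D grid over $(u_x,u_y,\omega)$.

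\textbf{Step 1 (sparse occupancy).} The space-Doppler bin count is $B_3=M^{1-2r+q}=M^{p+\varepsilon}$, so the number of balls is $K=M^p=B_3 M^{-\varepsilon}$. By Lemma~\ref{lem:max_load} with exponent gap $\varepsilon>0$, $n_{\max}=O(1)$ w.h.p. In fact a first-moment count gives more: the probability that some bin holds at least $c$ users is at most $B_3\binom{K}{c}B_3^{-c}\le M^{p+\varepsilon-c\varepsilon}$, which vanishes once $c>1+p/\varepsilon$. I would also strengthen this to a minimum-separation statement. Use a grid coarsened by a constant factor $\gamma$, or equivalently count pairs of users whose joint frequencies fall within $\gamma$ resolution cells of each other. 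The expected number of such pairs is $O(K^2\gamma^3/B_3)=O(\gamma^3 M^{p-\varepsilon})$, which does not vanish. So exact well-separation fails, and I would instead aim for bounded clusters that are mutually well separated.

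\textbf{Step 2 (conditioning).} This is the main obstacle: I need $\lambda_1(\bar{\mathbf G})\gtrsim M^{-c'}$ for some fixed $c'$, or better $\Omega(1)$, whereas Lemma~\ref{lem:lambda_min_ub} only gives upper bounds. I would first try to reuse the lower-bound machinery of Appendix~\ref{ap:E}. The steering vectors $\mathbf b_k\otimes\mathbf a(u_{y,k})\otimes\mathbf a(u_{x,k})$ form a 3D Vandermonde (nonuniform Fourier) matrix. A separation-based bound of Moitra or multidimensional Ingham type (or Gershgorin with product Dirichlet kernels), applied to well-separated groups, together with cluster-based super-resolution bounds \cite{batenkov:siam:20, li:stable:2021} inside each $O(1)$-size cluster, gives $\lambda_1\ge c\,\prod(\text{intra-cluster separations})^{O(1)}$. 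Intra-cluster separations are at least $M^{-O(1)}$ w.h.p. by a union bound on uniform spacings. This yields $\mathrm{tr}(\bar{\mathbf G}^{-1})\le K M^{O(1)}$, hence an SINR of $\rho M^{1+q}/(KM^{O(1)})$. Provided the exponent can be kept below $1+q-p$, or $\rho$ grows suitably, $\log(1+\overline{\mathrm{SINR}})=\Theta(\log M)$.

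A cleaner fallback is available if the polynomial loss is too large. Apply ZF on a subset: from each occupied bin keep one user, and discard users with a neighbour within $\gamma$ cells. By Step 1 this keeps $(1-o(1))K$ users, whose frequencies are $\gamma$-separated, so the Gershgorin tail sum $\sum_{j}|g_{x}||g_{y}||g_\omega|$ is a small constant and $\lambda_1\ge1/2$. The rate of this subset lower-bounds the achievable STAB rate in the sense used in Corollary~\ref{cor:separation}. Its SINR is $\rho M^{1+q}/(2K)$, and $\log(1+\overline{\mathrm{SINR}})\gtrsim\log M$ as long as $1+q-p=1-2r+\varepsilon>0$, which holds.

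\textbf{Step 3.} Multiplying by the pre-log $K/L$ gives $\bar R_\Sigma=\Omega(M^{1-2r-\varepsilon}\log M)$ w.h.p. Since rates are nonnegative, taking expectations yields $\mathbb E[\bar R_\Sigma]\gtrsim M^{1-2r-\varepsilon}\log M$. Combined with Step 0 this gives the $\Theta$ claim. The edge effects of Dirichlet-kernel tails near wraparound in $\omega$ would be handled as in the ULA case.
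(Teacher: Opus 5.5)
\textbf{There is a genuine gap: your lower bound is for a pruned user set, not for $\bar R_\Sigma$.} Your upper bound via Theorem~\ref{thm:ERsum_UPA_STAB}(i) and your occupancy counts in Step~1 are fine. There is one harmless slip: $1+q-p=2r+\varepsilon$, not $1-2r+\varepsilon$; both are positive, so nothing breaks.

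The quantity $\bar R_\Sigma$ in \eqref{eq:stab_sumrate} is ZF over \emph{all} $K$ users, with common SINR $\rho ML/\mathrm{tr}(\bar{\mathbf G}^{-1})$. Discarding users moves the inequality the wrong way. For any subset $S$, the Schur complement gives $[\bar{\mathbf G}^{-1}]_{SS}\succeq\bar{\mathbf G}_S^{-1}$, hence $\mathrm{tr}(\bar{\mathbf G}^{-1})\ge\mathrm{tr}(\bar{\mathbf G}_S^{-1})$. The $\Theta(M^{p-\varepsilon})$ close pairs you correctly counted can therefore only lower the full-set SINR below what your pruned set enjoys, so the subset rate does not lower-bound $\bar R_\Sigma$. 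Corollary~\ref{cor:separation} does not license this step either: its proof in Appendix~\ref{ap:E} keeps every user. What your fallback establishes is achievability for a scheduled variant of STAB, not the stated $\Theta$ for $\mathbb E[\bar R_\Sigma]$.

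\textbf{Your primary route fails in part of the regime, not just for lack of detail.} The closest pair sits at normalized distance about $M^{-(p-\varepsilon)/3}$, so w.h.p.\ $\lambda_1(\bar{\mathbf G})\lesssim M^{-2(p-\varepsilon)/3+o(1)}$. The bound $\mathrm{tr}(\bar{\mathbf G}^{-1})\le K/\lambda_1(\bar{\mathbf G})$ can then deliver at best $\overline{\mathrm{SINR}}\gtrsim\rho M^{2r+\varepsilon-\frac{2}{3}(p-\varepsilon)+o(1)}$. This exponent is negative for, e.g., $r=0.1$, $p=0.9$, $\varepsilon=0.05$, and $\rho$ is fixed, so it cannot compensate. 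Yet the closest pair alone adds only about $M^{2(p-\varepsilon)/3}\ll K$ to the trace. The loss comes entirely from replacing a sum over users by $K$ times its worst term.

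\textbf{The missing idea is to control $\mathrm{tr}(\bar{\mathbf G}^{-1})$ on average rather than through the worst eigenvalue.} The paper does this via the minimum-norm property of ZF: $\mathrm{tr}(\bar{\mathbf G}^{-1})\le ML\|\bar{\mathbf F}_{\mathrm{ri}}\|_F^2$ for any right inverse of $\bar{\mathbf H}^{\mathsf H}$. It builds $\bar{\mathbf F}_{\mathrm{ri}}$ as follows.
\begin{enumerate}
\item It uses tapered local Lagrange interpolants on clusters of radius $M^{\varepsilon/6}$, so inter-cluster leakage is $O(M^{-(p+2)})$ and is absorbed by a Neumann correction.
\item A neighbour at distance $x$ costs an energy factor $x^{-2}$, but the expected number of such neighbours is $O((K/N_{\mathrm J})\,x^{2}\,dx)$.
\item Since $\int x^{D-3}\,dx$ converges for $D=3$, the expected total energy is $M^{p+o(1)}$.
\item Markov's inequality then gives $\overline{\mathrm{SINR}}\ge\rho M^{2r+\varepsilon/2-o(1)}$ w.h.p.\ for the full user set.
\end{enumerate}
Your Step~2 needs such an averaged energy bound. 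It also needs some mechanism that decouples the clusters despite the slow $1/n$ decay of the product Dirichlet kernels; in the paper this is the taper. A naive block-diagonal perturbation fails because the constant-size inter-cluster coupling is not small compared with the tiny intra-cluster eigenvalues.
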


\begin{proof}
See Appendix~\ref{ap:E}.
\end{proof}
}

Figs.~\ref{fig:cdf} and~\ref{fig:gain_map} provide finite-dimensional validation of the scaling laws. In Fig.~\ref{fig:cdf}, $p=0.5$ and $q\approx0.20$, while $R=60$, $90$, and $120$ km correspond to $r\approx0.415$, $0.342$, and $0.290$, respectively. Thus, increasing $R$ decreases $r$ toward the UPA threshold $(1-p)/2=0.25$ of spatial ZF, whose sum rate begins to recover at $R=120$ km. In contrast, the larger STAB threshold $(1+q-p)/2\approx0.35$ allows STAB to avoid collapse at $R=90$ and $120$ km, while its recovery remains limited at $R=60$ km. Consistently, Fig.~\ref{fig:gain_map} shows that the STAB gain is concentrated around the region $p>1-r$ and $q\ge p+r-1$, where the spatial domain is dense but the joint space-Doppler domain is sparse or critical. These results show that the asymptotic regime boundaries remain visible at moderate values of $M$, $K$, and $L$.

\section{Joint Space-Doppler User Selection for STAB}
\label{sec:joint_stab_scheduling}
Due to the crowding-induced eigenvalue collapse identified in our analysis, the spatial multiplexing gains in LoS-dominant LEO downlinks are fundamentally limited. This makes effective user scheduling essential, yet conventional semi-orthogonal user selection (SUS) \cite{yoo:jsac:06} operates solely in the spatial domain and cannot exploit the Doppler separability introduced by STAB. 
Motivated by this, in this section, we propose the SDS algorithm, which extends the semi-orthogonality criterion from the spatial domain to the joint space-Doppler domain. By jointly resolving users who are spatially proximate but possess distinct Doppler characteristics, SDS achieves higher multiplexing efficiency compared to purely spatial scheduling. 

SDS performs greedy sequential user selection based on the joint space-Doppler channel vectors $\bar{\mathbf{h}}_k$. Starting from an empty scheduled set $\mathcal{S}$ and candidate pool $\mathcal{K}_1 = \{1, \dots, U\}$, the algorithm computes at each iteration $i$ the effective channel vector
\begin{align}
\mathbf{g}_k = (\mathbf{I} - \mathbf{Q}_{i-1}\mathbf{Q}_{i-1}^{\mathsf{H}})\bar{\mathbf{h}}_k, \quad k \in \mathcal{K}_i
\end{align}
where $\mathbf{Q}_{i-1} = [\mathbf{q}_1, \dots, \mathbf{q}_{i-1}]$ collects the orthonormal bases from previously selected users, with $\mathbf{Q}_0 = \emptyset$ for the first iteration.

To maximize the available space-Doppler degrees of freedom, SDS selects the user $\pi(i)$ who yields the largest effective channel gain $\|\mathbf{g}_k\|^2$ from the current candidate set $\mathcal{K}_i$. After adding $\pi(i)$ to $\mathcal{S}$, the orthonormal basis matrix is updated to $\mathbf{Q}_i = [\mathbf{Q}_{i-1}, \mathbf{q}_i]$ with $\mathbf{q}_i = \mathbf{g}_{\pi(i)}/\|\mathbf{g}_{\pi(i)}\|$. 
The candidate set is then filtered by the space-Doppler semi-orthogonality criterion
\begin{align}
\frac{|\bar{\mathbf{h}}_k^{\mathsf{H}} \bar{\mathbf{h}}_{\pi(i)}|}{\|\bar{\mathbf{h}}_k\| \|\bar{\mathbf{h}}_{\pi(i)}\|} < \alpha_{\mathrm{ST}} \label{eq:criterion}
\end{align}
where $\alpha_{\mathrm{ST}}$ is a numerically optimized threshold.

The complete procedure of the proposed SDS is summarized in Algorithm~\ref{alg:stsus}. SDS constructs a well-conditioned space-time channel matrix without exhaustive search, and terminates when the number of selected users reaches $K$ or no remaining candidate satisfies the semi-orthogonality condition.
{
Selecting $\pi(i)$ favors the channel least aligned with the span of the already selected users, suppressing the growth of $\mathrm{tr}(\bar{\mathbf{G}}^{-1})$, while \eqref{eq:criterion} encourages the selected users to occupy distinct bins, lowering $n_{\max}$. In LEO downlinks where spatial angular separation is inherently limited, SDS therefore not only supports more simultaneous users than purely spatial scheduling, but also prevents the rate collapse induced by a large $n_{\max}$.

In terms of complexity, SDS costs $O(UMLK^2)$, exceeding the $O(UMK^2)$ of purely spatial SUS by the factor $L$. This factor arises from the $ML$-dimensional inner products of $\bar{\mathbf{h}}_k$, and is removed by replacing them with the product of the Dirichlet kernels in closed form, which brings SDS to the same order of complexity as SUS. Moreover, since every input of SDS is determined by the geometric parameters, the selection is recomputed only once per coherence interval of these parameters, which far exceeds the symbol duration and thus renders SDS feasible on LEO satellite payloads.
}

\begin{algorithm}[t]
\caption{SDS Algorithm}
\label{alg:stsus}
\BlankLine
\textbf{Initialize}: $i \gets 1$, $\mathcal{S} \gets \emptyset$, $\mathcal{K}_1 \gets \{1,2,\ldots,U\}$, and $\mathbf{Q} \gets [\,]$\;
\While{$i \le K$ \textbf{and} $\mathcal{K}_i \neq \emptyset$}{
    \ForEach{$k \in \mathcal{K}_i$}{
        \eIf{$i = 1$}{
            $\mathbf{g}_k \gets \bar{\mathbf{h}}_k$\;
        }{
            $\mathbf{g}_k \gets (\mathbf{I} - \mathbf{Q}\mathbf{Q}^{\mathsf H})\bar{\mathbf{h}}_k$\;
        }
    }
    $\pi(i) \gets \arg\max_{k \in \mathcal{K}_i} \|\mathbf{g}_k\|^2$\;
    $\mathcal{S} \gets \mathcal{S} \cup \{\pi(i)\}$\;
    $\mathbf{q}_i \gets \mathbf{g}_{\pi(i)} / \|\mathbf{g}_{\pi(i)}\|$\;
    $\mathbf{Q} \gets [\mathbf{Q}, \mathbf{q}_i]$\;
    \If{$i < K$}{
        $\mathcal{K}_{i+1} \gets \left\{ k \in \mathcal{K}_i \setminus \{\pi(i)\} :
        \dfrac{|\bar{\mathbf{h}}_k^{\mathsf H}\bar{\mathbf{h}}_{\pi(i)}|}
        {\|\bar{\mathbf{h}}_k\|\,\|\bar{\mathbf{h}}_{\pi(i)}\|} < \alpha_{\mathrm{ST}} \right\}$\;
    }
    $i \gets i + 1$\;
}
\end{algorithm}

{
\section{Robustness of the Scaling Laws}
\label{sec:robust}
In this section, we examine whether STAB preserves its scaling gains under CSI errors and residual multipath. Since STAB reconstructs each LoS space-time signature from the spatial frequencies and normalized Doppler, errors from estimation, quantization, and feedback delay manifest as perturbations of these parameters.

We aggregate these effects as $\widehat{u}_{x,k}=u_{x,k}+\Delta u_{x,k}$, $\widehat{u}_{y,k}=u_{y,k}+\Delta u_{y,k}$, and $\widehat{\omega}_k=\omega_k+\Delta\omega_k$, where the aggregate errors are characterized by $\mathbb{E}[(\Delta u_{x,k})^2]=\mathbb{E}[(\Delta u_{y,k})^2]=\sigma_u^2$ and $\mathbb{E}[(\Delta\omega_k)^2]=\sigma_\omega^2$. Let
$\bar{\mathbf a}_k=\mathbf b_k\otimes\mathbf a_k$
denote the unit-norm LoS space-time signature, and let
$\widehat{\bar{\mathbf a}}_k$ denote its reported counterpart.

\subsection{Imperfect CSI}
\label{sec:rob_csi}

The fraction of the true LoS energy lying outside the reported direction is
\begin{equation}
\zeta_{k,\mathrm{geo}}
=
1-\left|
\widehat{\bar{\mathbf a}}_k^{\mathsf H}
\bar{\mathbf a}_k
\right|^2.
\label{eq:rob_geo}
\end{equation}
In the small-error regime, expanding each Dirichlet kernel to second order,
\begin{equation}
\mathbb{E}[\zeta_{k,\mathrm{geo}}]
\approx
\frac{\pi^2}{3}
\left[
(M_x^2+M_y^2-2)\sigma_u^2
+
(L^2-1)\sigma_\omega^2
\right].
\label{eq:rob_geo_avg}
\end{equation}
Thus, CSI robustness is governed by the resolution-normalized errors
$M_x\sigma_u$, $M_y\sigma_u$, and $L\sigma_\omega$, rather than by the raw parameter errors.
We parameterize
\begin{equation}
\sigma_u=M^{-s_u},
\qquad
\sigma_\omega=M^{-s_\omega}.
\label{eq:rob_csi_parameterization}
\end{equation}
For the square-UPA scaling
$M_x=M_y=M^{1/2}$ and $L=M^q$,
\eqref{eq:rob_geo_avg} gives
\begin{equation}
\mathbb{E}[\zeta_{k,\mathrm{geo}}]
\lesssim
M^{1-2s_u}
+
M^{2q-2s_\omega}.
\label{eq:rob_geo_scaling}
\end{equation}

\subsection{Residual Multipath and Unified Robustness}
\label{sec:rob_mp}

Residual multipath is modeled through the composite channel
\begin{equation}
\bar{\mathbf h}_{k,\kappa}
=
\sqrt{ML}\beta_k
\left(
\sqrt{\tfrac{\kappa_k}{1+\kappa_k}}\,\bar{\mathbf a}_k
+
\sqrt{\tfrac{1}{1+\kappa_k}}\,\boldsymbol{\xi}_k
\right),
\label{eq:rob_composite}
\end{equation}
where the diffuse components $\boldsymbol{\xi}_k$ are independent
across users and of the LoS components, with
$\mathbb{E}[\boldsymbol{\xi}_k]=\mathbf 0$,
$\mathbb{E}[\boldsymbol{\xi}_k\boldsymbol{\xi}_k^{\mathsf H}]
=\boldsymbol{\Sigma}_{\xi,k}$, and
$\operatorname{tr}(\boldsymbol{\Sigma}_{\xi,k})=1$.
With
$\mathbf P_k^\perp=\mathbf I-\widehat{\bar{\mathbf a}}_k
\widehat{\bar{\mathbf a}}_k^{\mathsf H}$
denoting the projector onto the orthogonal complement of the reported
LoS direction, the diffuse leakage is
$\zeta_{k,\mathrm{mp}}
=\mathbb{E}[\|\mathbf P_k^\perp\boldsymbol{\xi}_k\|^2]
=\operatorname{tr}(\mathbf P_k^\perp\boldsymbol{\Sigma}_{\xi,k})$.

Since the scatterers are localized near each user, the paths of user
$k$ approximately share the satellite-side steering vector
$\mathbf a_k$, and we adopt
\begin{equation}
\boldsymbol{\xi}_k
=
\mathbf b_{k,\mathrm d}\otimes\mathbf a_k,
\label{eq:rob_diffuse_model}
\end{equation}
with
$\mathbf b_{k,\mathrm d}\sim
\mathcal{CN}(\mathbf 0,\mathbf I_L/L)$.
Under this decorrelated slow-time model, $\boldsymbol{\xi}_k$ has unit
average energy while
$\mathbb{E}[|\widehat{\bar{\mathbf a}}_k^{\mathsf H}
\boldsymbol{\xi}_k|^2]
=|\widehat{\mathbf a}_k^{\mathsf H}\mathbf a_k|^2/L$,
so that
\begin{equation}
\zeta_{k,\mathrm{mp}}
=
1-\frac{1}{L}
\left|\widehat{\mathbf a}_k^{\mathsf H}\mathbf a_k\right|^2
\approx
1-\frac{1}{L},
\label{eq:rob_mp_fraction}
\end{equation}
the additional leakage from spatial mismatch being already captured by
$\zeta_{k,\mathrm{geo}}$.
Projecting \eqref{eq:rob_composite} onto $\mathbf P_k^\perp$ and noting
that the LoS--diffuse cross term vanishes in expectation, the total
normalized off-direction channel energy is
\begin{equation}
\zeta_{k,\mathrm{tot}}
=
\frac{\kappa_k\zeta_{k,\mathrm{geo}}+\zeta_{k,\mathrm{mp}}}
{1+\kappa_k}.
\label{eq:rob_total}
\end{equation}

Let $\widehat{\bar{\mathbf G}}$ denote the Gram matrix of the reported
channels. ZF eliminates interference along the reported directions,
while the residual off-direction interference power is bounded by
$\zeta_{k,\mathrm{tot}}$ times the squared ZF precoder norm, which
scales as $1/\lambda_1(\widehat{\bar{\mathbf G}})$. We therefore define
\begin{equation}
\Gamma_k
=
\frac{\zeta_{k,\mathrm{tot}}}
{\lambda_1(\widehat{\bar{\mathbf G}})}.
\label{eq:rob_metric}
\end{equation}
Provided that the desired channel projection onto the reported beam
remains bounded away from zero,
\begin{equation}
\frac{1}{\mathrm{SINR}_{k,\mathrm{imp}}}
\lesssim
\frac{1}{\overline{\mathrm{SINR}}_k}
+
\Gamma_k,
\label{eq:rob_sinr}
\end{equation}
i.e., the same CSI or multipath leakage is relatively benign for
well-conditioned reported channels but strongly amplified when
$\widehat{\bar{\mathbf G}}$ is nearly singular.

Parameterizing the effective multipath leakage and reported-channel
conditioning as
$(1-1/L)/(1+\kappa_M)=M^{-s_{\mathrm{mp}}}$ and
$\lambda_1(\widehat{\bar{\mathbf G}})=M^{-s_\lambda}$, and using the
UPA bound
$\mathbb{E}[\zeta_{k,\mathrm{geo}}]
\lesssim M^{1-2s_u}+M^{2q-2s_\omega}$
from \eqref{eq:rob_geo_scaling}, relations
\eqref{eq:rob_total} and \eqref{eq:rob_metric} yield
\begin{equation}
\mathbb{E}[\Gamma_k]
\lesssim
M^{s_\lambda+1-2s_u}
+
M^{s_\lambda+2q-2s_\omega}
+
M^{s_\lambda-s_{\mathrm{mp}}}.
\label{eq:rob_scaling}
\end{equation}
Hence $\Gamma_k$ decays polynomially whenever
$s_u>(1+s_\lambda)/2$,
$s_\omega>q+s_\lambda/2$, and
$s_{\mathrm{mp}}>s_\lambda$.
Under these conditions, each scheduled user retains a rate of order
$\log M$,
\begin{equation}
\mathbb{E}[\bar R_{\Sigma,\mathrm{imp}}]
=
\Omega\!\left(M^{p-q}\log M\right),
\label{eq:rob_rate}
\end{equation}
showing that imperfect CSI and residual multipath do not change the
ideal STAB sum rate order.
}

\section{Numerical Results}
\label{sec:numerical}
\subsection{Simulation Setup}
\label{subsec:simulation_setup}
{
We consider a Ka-band LEO downlink serving earth stations in motion (ESIM)~\cite{itu:wrc23:res123}. The satellite operates at an altitude of $H=600$~km and employs a UPA with $M_x=M_y=16$. Following Section~\ref{subsec:system_setup}, users are distributed within a service radius of $R=75$~km, corresponding to $R/H=0.125$. By \eqref{eq:upa_bin_count}, this geometry yields $B = M(R/H)^2 = 4$ spatial resolution bins. In each scheduling instance, the satellite selects and simultaneously serves $K=32$ users from $U=256$ candidates.
The cell boresight elevation angle is $\theta_{\mathrm{el}}=30^\circ$, and the carrier frequency and bandwidth are $f_c=20$~GHz and $6$~MHz, respectively. The receiver noise power is computed from a noise spectral density of $-174$~dBm/Hz. Each terminal has a receive gain of $G_{\mathrm{rx}}=36.8$~dBi~\cite{itur_s2464,itur_res169}. Under the Friis free-space model, the large-scale channel amplitude is
\begin{equation} |\beta_k| = \sqrt{G_{\mathrm{rx}} \left(\frac{c}{4\pi f_c d_k} \right)^2}, \label{eq:simulation_path_gain} \end{equation}
where the path loss exponent is two, $c$ is the speed of light, and $d_k$ is the slant range to user $k$.
For a terminal moving at speed $v_k$ with heading $\psi_k$ relative to the satellite azimuth, the residual Doppler shift is
\begin{equation}
\nu_k^{\mathrm{ut}}
=
\frac{f_c}{c}
v_k\cos\theta_{\mathrm{el}}\cos\psi_k.
\label{eq:esim_doppler}
\end{equation}
The headings and speeds are independently drawn as
$\psi_k\sim\mathcal{U}[0,2\pi)$ and
$v_k\sim\mathcal{U}[0,v_{\max}]$, respectively, where
$v_{\max}=250$~m/s. This bounded uniform speed model is commonly adopted as a stochastic mobility abstraction in satellite-network studies~\cite{abdollahpour:spawc:26,sharma:twc:20}.
With
$\nu_{\max}=(f_c/c)v_{\max}\cos\theta_{\mathrm{el}}$,
the snapshot interval is set to
$T_r=1.5/(2\nu_{\max})=\SI{52}{\micro\second}$.
Unless otherwise stated, STAB uses $L=3$ snapshots, yielding an observation interval of
$LT_r=\SI{156}{\micro\second}$.
The large-scale coefficient $\beta_k$ and spatial steering vector $\mathbf a_k$ are treated as constant over this interval.

\begin{figure}[!t]
    \centerline{\resizebox{0.9\columnwidth}{!}{
    \includegraphics{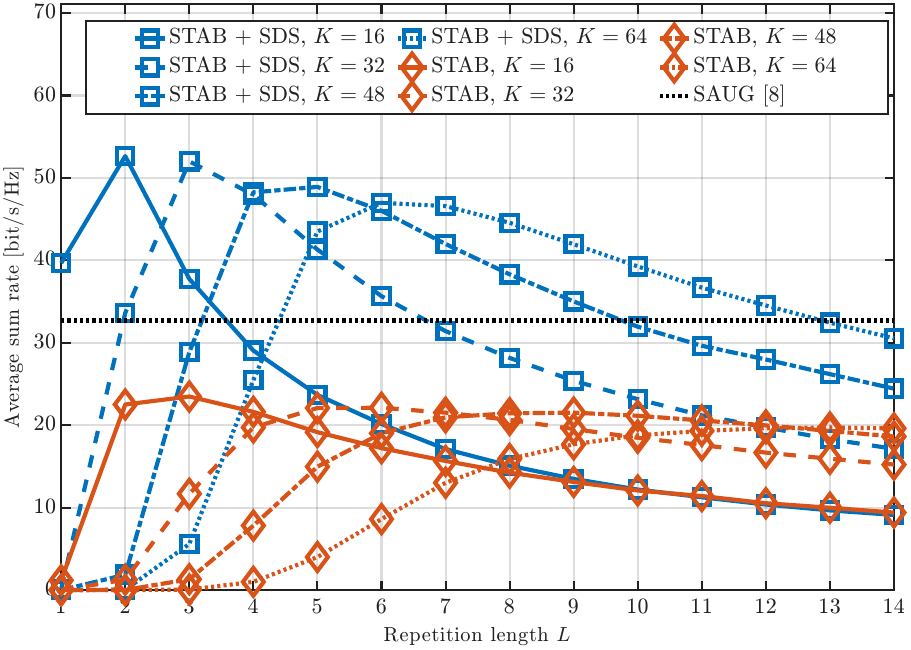}}}
    \caption{Average sum rate versus the repetition length $L$ for different
    served loads with $\kappa_k=20$~dB, $P=44$~dBm.}
    \label{fig:optimal_L}
\end{figure}

\begin{figure}[!t]
    \centerline{\resizebox{0.9\columnwidth}{!}{
    \includegraphics{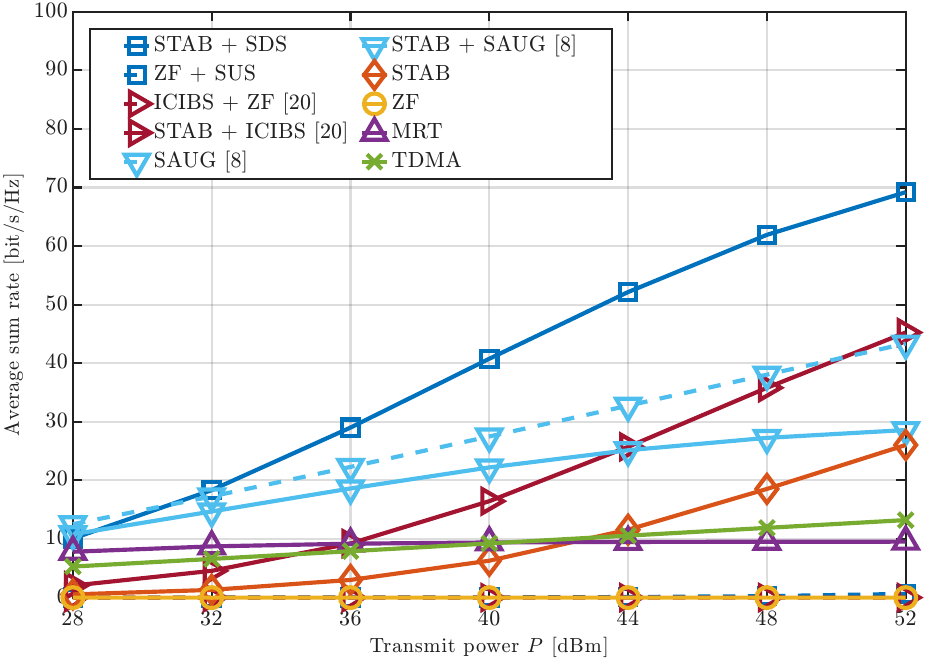}}}
    \caption{Average sum rate versus transmit power for STAB + SDS, the
    spatial baselines, and their space-Doppler extensions, with $L=3$ and
    $\kappa_k=20$ dB.}
    \label{fig:sumrate_power}
\end{figure}

\begin{remark}[Physical Doppler and the analytical model]
\normalfont
Taking the normalized Doppler $\omega_k$ modulo 1 folds the physical Doppler axis onto $\mathcal{W}$, and each point in $\mathcal{W}$ accumulates the density of all physical Doppler values mapped to it. As $T_r$ increases, more and more widely separated segments of the physical density overlap at each point, and this averaging flattens the density into a uniform distribution over $\mathcal{W}$. In other words, while flattening requires a long $T_r$, maintaining the stationarity of the channel coefficients requires a short $T_r$. A high carrier frequency, such as the Ka-band, resolves this tradeoff by proportionally widening the Doppler spread, allowing wrapping to occur even with a short $T_r$. Accordingly, the uniform normalized Doppler model in Section~\ref{sec:asymptotic} is used as a limiting analytical approximation, whereas the numerical simulations generate Doppler shifts directly from the geometric mobility model. Furthermore, since the carrier frequency is involved only in this mapping to generate $\omega_k$, the results in Section~\ref{sec:asymptotic} are independent of it; the carrier frequency solely determines the velocity difference to which a normalized Doppler separation corresponds.
\end{remark}
The non-ideal channels follow Section~\ref{sec:robust}. We set $M_x\sigma_u = M_y\sigma_u = L\sigma_\omega \in \{10^{-3},10^{-2}\}$, where the underlying errors $\Delta u_{x,k}$, $\Delta u_{y,k}$, and $\Delta\omega_k$ are each generated as i.i.d.\ Gaussian random variables, and the residual multipath follows the decorrelated diffuse model in \eqref{eq:rob_diffuse_model} with $\kappa_k=20$~dB.
}

\begin{figure}[!t]
    \centerline{\resizebox{0.9\columnwidth}{!}{
    \includegraphics{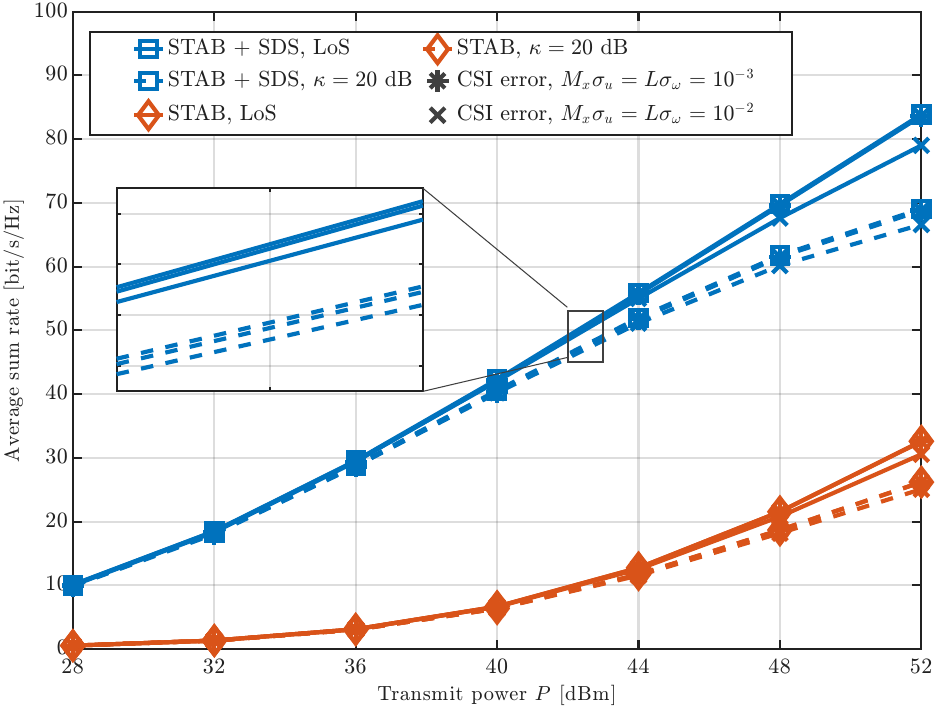}}}
    \caption{Average sum rate versus transmit power under the residual multipath and CSI errors.}
    \label{fig:robustness}
\end{figure}

\subsection{Performance Evaluations}
\label{subsec:finite_dimensional_results}
{
\fig{fig:optimal_L} illustrates the tradeoff between additional space-Doppler resolution and the $1/L$ repetition penalty. Increasing $L$ initially improves channel conditioning by spreading users over more joint resolution bins, but the repetition penalty eventually dominates. The optimal repetition lengths are approximately $L^\star=2, 3, 5$, and $6$ for $K=16, 32, 48$, and $64$, respectively, maintaining the average joint bin occupancy $K/(BL^\star)$ within a narrow range of $2$--$2.7$. Thus, $L^\star$ scales with the spatial crowding level $K/B$ and should be chosen as the shortest length that provides sufficient conditioning recovery. Furthermore, since space-angle user grouping (SAUG)~\cite{you:jsac:20} always divides the pre-log term by $n_{\max}$, STAB consistently achieves superior performance at the optimal $L$.

\fig{fig:sumrate_power} compares STAB + SDS with spatial baselines and their space-Doppler extensions. Among schemes without user selection, STAB already outperforms ZF, maximum ratio transmission (MRT), and time division multiple access (TDMA), as the Doppler dimension alleviates the severe spatial correlation caused by the narrow angular spread at the satellite. Among selection-based baselines, inter-channel interference-based selection (ICIBS)~\cite{chaves:ojcoms:22} and ZF + SUS rely mainly on pairwise channel correlations, failing to reliably prevent Gram matrix ill-conditioning under limited spatial degrees of freedom.

SAUG avoids this collapse by assigning users in the same spatial bin to orthogonal groups, but incurs a pre-log penalty proportional to $1/n_{\max}$. In contrast, STAB distributes users over joint space-Doppler bins, enabling a repetition length smaller than the required number of spatial groups while multiplexing more streams per channel use. SDS further enhances the minimum Gram matrix eigenvalue by selecting a well-conditioned user set. Consequently, STAB + SDS achieves the highest sum rate above $32\text{ dBm}$, with gains becoming more pronounced under higher occupancy where channel conditioning dominates performance.

Applying STAB to the baselines further clarifies these behaviors. STAB + ICIBS benefits from Doppler resolution, but its normalized pairwise correlation metric neither accounts for weak large-scale gains nor directly controls the worst-conditioned direction of the Gram matrix. Conversely, STAB offers negligible gain to SAUG because its groups are already separated via orthogonal resource allocation while the $1/L$ penalty remains.

\fig{fig:robustness} evaluates the effects of CSI errors and residual multipath. Larger $M_x\sigma_u$, $M_y\sigma_u$, and $L\sigma_\omega$ increase the mismatch between the true and reported LoS signatures, while a smaller Rician factor increases the diffuse power outside the reported direction. Both effects increase $\Gamma_k$ and hence the residual ZF leakage, particularly at high transmit power. The loss is larger under the decorrelated diffuse model because approximately $1-1/L$ of the diffuse power lies outside the nominal Doppler signature. Nevertheless, STAB + SDS remains more robust than unselected STAB because SDS improves $\lambda_1(\widehat{\bar{\mathbf G}})$ and limits the amplification of the same channel mismatch.
}

\section{Conclusion}
\label{sec:conclusion}
This paper established the fundamental performance limits of MU-MIMO in LoS-dominant LEO satellite channels by linking user crowding to the conditioning of the Vandermonde channel matrix through a balls-and-bins abstraction. Our analysis revealed a sharp density threshold beyond which spatial ZF precoding provably collapses. To overcome this limitation, we proposed STAB, which exploits residual Doppler shifts as an additional degree of freedom, and developed the SDS algorithm for joint space-Doppler user scheduling. Both asymptotic analysis and finite-dimensional simulations confirmed that STAB with SDS achieves substantial sum rate gains over conventional methods in dense user regimes. {Future work could extend this framework to multi-satellite systems, addressing asynchronous arrivals due to propagation delays and Doppler ambiguities arising when STAB uses Doppler for user discrimination.}

\appendices

\section{Proof of Lemma~\ref{lem:max_load}} \label{ap:A}
Let $n_b$ denote the load of bin $b$, $b=1,\ldots,B$, and define
\begin{align}
n_{\max} = \max_{1\le b\le B} n_b, \quad \lambda_M = \frac{K}{B} = M^{p+r-1}.
\end{align}
We treat the three loading regimes separately.

\textbf{1) Sparse regime} $(p<1-r)$:
In this case, $\lambda_M \to 0$ as $M \to \infty$. For any fixed integer $t\ge 1$, the probability that $n_{\max}$ exceeds $t$ is bounded as
\begin{align}
\mathbb{P}(n_{\max}\ge t) &\le B \cdot \mathbb{P}\!\left(\mathcal{B}\!\left(K,\frac{1}{B}\right)\ge t\right) \notag
\\&\le B \binom{K}{t} \left(\frac{1}{B} \right)^t \le B \left(\frac{e\lambda_M}{t}\right)^t,
\end{align}
where $\mathcal{B}(\cdot)$ denotes a binomial random variable. The first and the second inequality follows from the union bound, and the third is established by $\binom{K}{t} \le \left(\frac{eK}{t}\right)^t$.
Since
\begin{align}
B\left(\frac{e\lambda_M}{t}\right)^t
&=
M^{\,1-r+t(p+r-1)}\left(\frac{e}{t}\right)^t,
\end{align}
and $p+r-1<0$, one can choose a sufficiently large $t$ such that
$1-r+t(p+r-1)<0$. Hence $\mathbb{P}(n_{\max}\ge t)\to 0$ as $M \to \infty$, which implies
\begin{align}
n_{\max}=O(1) \quad \text{w.h.p.}
\end{align}

\textbf{2) Critical regime} $(p=1-r)$:
In this case, $\lambda_M = \frac{K}{B} = 1$, so that $K=B$. Let
\begin{align}
k_\alpha = \alpha \frac{\log B}{\log\log B},
\end{align}
and define, for each bin $b=1,\ldots,B$, $X_b = \mathbf{1}\{n_b \ge k_\alpha\}$, the counting variable $X = \sum_{b=1}^B X_b$.
Then $X>0$ if and only if $n_{\max}\ge k_\alpha$. Since the bins are identically distributed,
\begin{align}
\mathbb{E}[X]
= B\,\mathbb{P}\!\left(\mathcal{B}\!\left(B,\frac{1}{B}\right)\ge k_\alpha\right).
\end{align}
Using that this upper tail is asymptotically dominated by $\mathbb{P}\!\left(\mathcal{B}\!\left(B,\frac{1}{B}\right) = k_\alpha\right)$ and applying
Stirling's formula, one obtains
\begin{align}
\mathbb{E}[X] = B^{\,1-\alpha+o(1)}.
\end{align}
Hence, if $\alpha>1$, then $\mathbb{E}[X]\to 0$, and Markov's inequality yields $\mathbb{P}(X>0)=o(1)$.
Therefore,
\begin{align}
\mathbb{P}(n_{\max}\ge k_\alpha)=o(1),
\end{align}
which gives the upper bound.

On the other hand, if $0<\alpha<1$, then $\mathbb{E}[X]\to\infty$. Moreover, for
$b\neq b'$,
\begin{align}
\mathbb{E}[X_bX_{b'}]
&= \mathbb{P}(n_b\ge k_\alpha,\;n_{b'}\ge k_\alpha) \notag
\\
&\le (1+o(1))\mathbb{P}(n_b\ge k_\alpha)^2.
\end{align}
Thus,
\begin{align}
\mathbb{E}[X^2]
= \sum_{b=1}^B \mathbb{E}[X_b]
+ \sum_{b\ne b'} \mathbb{E}[X_bX_{b'}]
= (1+o(1))\mathbb{E}[X]^2.
\end{align}
By the second moment method,
\begin{align}
\mathbb{P}(X>0)\ge \frac{\mathbb{E}[X]^2}{\mathbb{E}[X^2]}= 1-o(1).
\end{align}
Hence,
\begin{align}
\mathbb{P}(n_{\max}\ge k_\alpha)=1-o(1),
\end{align}
which gives the lower bound.

Combining the upper and lower bounds and substituting $B=M^{1-r}$, we conclude that
\begin{align}
n_{\max} = \Theta \left( \frac{\log M}{\log\log M} \right) \qquad \text{w.h.p.}
\end{align}
A detailed proof can be found in \cite{raab:balls:1998}.

\textbf{3) Dense regime} $p>1-r$: Here $\lambda_M = M^{p+r-1}\to\infty$. The lower
bound follows directly from the pigeonhole principle as
\begin{align}
n_{\max}\ge \left\lceil \frac{K}{B} \right\rceil \ge M^{p+r-1}.
\end{align}
For the upper bound, note that each $n_b\sim \mathcal{B}(K,1/B)$ has mean
$\lambda_M$. By Chernoff's inequality,
\begin{align}
\mathbb{P}(n_b \ge 2\lambda_M) \le e^{-\lambda_M/3}.
\end{align}
Applying the union bound over all $B$ bins yields
\begin{align}
\mathbb{P}(n_{\max}\ge 2\lambda_M) &\le B e^{-\lambda_M/3} \notag
\\&=M^{1-r} e^{-M^{p+r-1}/3}.
\end{align}
Hence, $\mathbb{P}(n_{\max} \ge 2\lambda_M) \to 0$ as $M \to \infty$. This implies that
\begin{align}
n_{\max} &= \Theta(\lambda_M) = \Theta\!\left(M^{p+r-1}\right) \quad \text{w.h.p.}
\end{align}

\section{Proof of Lemma~\ref{lem:lambda_min_ub}} \label{ap:B}
{
We first consider $D=1$. Let $u_0$ be the left endpoint of the occupied resolution bin and write $u_k=u_0+\vartheta_k/M$, where $\vartheta_k\in[0,1]$. For any $\mathbf w\in\mathbb C^n\setminus\{\mathbf 0\}$, the Rayleigh quotient gives
\begin{equation}
\lambda_1(\mathbf G)
\le
\frac{\mathbf w^{\mathsf H}\mathbf G\mathbf w}{\|\mathbf w\|_2^2}
=
\frac{1}{M\|\mathbf w\|_2^2}
\sum_{m=0}^{M-1}
\left|
\sum_{k=1}^n
w_k e^{j2\pi(m/M)\vartheta_k}
\right|^2.
\label{eq:ltrB_rayleigh}
\end{equation}
The common phase $e^{j2\pi m u_0}$ does not appear because it has unit magnitude. Choose a nonzero $\mathbf w$ satisfying
\begin{equation}
\sum_{k=1}^n w_k\vartheta_k^\ell=0,
\qquad
\ell=0,\ldots,J_1(n)-1.
\label{eq:ltrB_moment}
\end{equation}
Since $J_1(n)=n-1$, these are $n-1$ homogeneous linear constraints on $n$ coefficients, so such a vector always exists. The constraints cancel all Taylor terms of order below $J_1(n)$. Hence, uniformly over $m=0,\ldots,M-1$,
\begin{align}
\left|
\sum_{k=1}^n
w_k e^{j2\pi(m/M)\vartheta_k}
\right|
&\le
\|\mathbf w\|_1
\sum_{\ell=J_1(n)}^\infty
\frac{(2\pi)^\ell}{\ell!}\notag\\
&\le
\|\mathbf w\|_1 e^{2\pi}
\frac{(2\pi)^{J_1(n)}}{J_1(n)!}.
\label{eq:ltrB_tail}
\end{align}
Substituting \eqref{eq:ltrB_tail} into \eqref{eq:ltrB_rayleigh}, using $\|\mathbf w\|_1\le\sqrt n\|\mathbf w\|_2$, and applying $J_1(n)!\ge[J_1(n)/e]^{J_1(n)}$ yield
\begin{align}
\lambda_1(\mathbf G)
&\le
n e^{4\pi}
\left(
\frac{2\pi e}{J_1(n)}
\right)^{2J_1(n)}\notag\\
&=
n e^{4\pi}
\left(
\frac{2\pi e}{n-1}
\right)^{2(n-1)},
\label{eq:ltrB_final}
\end{align}
which proves \eqref{eq:lambda_min_ub_1d}.

For a $D$-dimensional resolution bin, let $\boldsymbol\vartheta_k\in[0,1]^D$ denote the normalized in-bin offset of user $k$. Choose $\mathbf w\neq\mathbf 0$ such that
\begin{equation}
\sum_{k=1}^n
w_k\boldsymbol\vartheta_k^{\boldsymbol\ell}=0,
\qquad
|\boldsymbol\ell|<J_D(n),
\label{eq:ltrB_moment_d}
\end{equation}
where $\boldsymbol\ell=(\ell_1,\ldots,\ell_D)$ is a nonnegative multi-index and $|\boldsymbol\ell|=\ell_1+\cdots+\ell_D$. The number of constraints is
$\binom{J_D(n)-1+D}{D}\le n-1$ by \eqref{eq:taylor_order}, so a nonzero solution exists. These constraints cancel every multivariate Taylor term of total degree below $J_D(n)$. Since the inner product between a normalized array index in $[0,1)^D$ and $\boldsymbol\vartheta_k$ is at most $D$, the same tail argument gives
\begin{equation}
\lambda_1(\mathbf G)
\le
n e^{4\pi D}
\left(
\frac{2\pi D e}{J_D(n)}
\right)^{2J_D(n)},
\label{eq:ltrB_final_d}
\end{equation}
which proves \eqref{eq:lambda_min_ub}.
}

\section{Proof of Lemma~\ref{lem:zf_sumrate}} \label{ap:C}
Let $\mathbf{G}_S \in \mathbb{C}^{n \times n} $ denote the principal submatrix of $\mathbf{G}$ corresponding to the $n$ users in a most-loaded bin, i.e., $n_{\max} = n$. Then,
\begin{subequations}\label{eq:lemma2_upper}
\begin{align}
\mathbb{E}[R_\Sigma  \mid n_{\max} = n]
&{\le}\;
K\log_2 \big(1+\rho M\,\lambda_1(\mathbf{G})\big)
\label{eq:lemma2_upper_a}\\
&{\le}\;
M^{p}\log_2 \big(1+\rho M\,\lambda_1(\mathbf{G}_S)\big)
\label{eq:lemma2_upper_b}\\
&{\le}\;
{M^{p}\log_2 \left(1+\rho M\,n\,e^{4\pi}\left(\tfrac{2\pi e}{n-1}\right)^{2(n-1)}\right)}
\label{eq:lemma2_upper_c}
\end{align}
\end{subequations}
where \eqref{eq:lemma2_upper_a} follows from \eqref{eq:sumrate_lambda_min_ub}, \eqref{eq:lemma2_upper_b} follows from the Cauchy interlacing theorem for Hermitian matrices, which yields
$\lambda_1(\mathbf{G})\le \lambda_1(\mathbf{G}_S)$. {Finally, \eqref{eq:lemma2_upper_c} follows from the one-dimensional bound \eqref{eq:lambda_min_ub_1d} of Lemma~\ref{lem:lambda_min_ub} applied to the most-loaded bin.}

\section{Proof of Theorem~\ref{thm:ERsum_ULA}} \label{ap:D}
We partition the spatial support $\CMcal{U} = [-\frac{R}{2H},\frac{R}{2H}]$ into bins of width $1/M$. The total number of bins is given by
\begin{equation}
B = \frac{|\CMcal{U}|}{1/M} = \frac{R/H}{1/M} = \frac{MR}{H} = M^{1-r}.
\end{equation}

Due to the small-angle approximation, the system can be modeled as a balls-and-bins problem where $K = M^p$ users are placed i.i.d. uniform into $B = M^{1-r}$ bins. To derive the upper scaling law for $\mathbb{E}[R_\Sigma]$, we apply the law of total expectation as
\begin{equation}
\mathbb{E}[R_\Sigma] = \mathbb{E}[R_\Sigma|\CMcal{B}]\mathbb{P}(\CMcal{B}) + \mathbb{E}[R_\Sigma|\CMcal{B}^c]\mathbb{P}(\CMcal{B}^c),
\end{equation}
where $\CMcal{B}$ denotes the event that the maximum load $n_{\max}$ satisfies the relevant bounds, an event that holds w.h.p. as $M \to \infty$. To bound the contribution of the event $\CMcal{B}^c$, we note that the sum rate is always upper bounded by
\begin{align}
    R_\Sigma \leq K \log_2 \left( 1 + \frac{\rho M}{K} \right),
\end{align}
since $\text{tr}(\mathbf{G}^{-1}) = \sum_{i=1}^K \lambda_i(\mathbf{G})^{-1} \geq K$. Consequently, the average sum rate can be bounded as
\begin{equation}
\mathbb{E}[R_\Sigma] \leq \mathbb{E}[R_\Sigma|\CMcal{B}]\mathbb{P}(\CMcal{B}) + K \log_2 \left( 1 + \frac{\rho M}{K} \right) \mathbb{P}(\CMcal{B}^c). \label{eq:law_of_total}
\end{equation}

The scaling behavior of the maximum load $n_{\max}$ is given by Lemma~\ref{lem:max_load}. Accordingly, we analyze the average sum rate under the three scaling regimes defined therein.

{Throughout this appendix, Lemma~\ref{lem:lambda_min_ub} is used in asymptotic form. For fixed $D$,
\begin{equation}
J_D(n)=(D!\,n)^{1/D}(1+o(1)),
\end{equation}
and therefore
\begin{align}
ne^{4\pi D}
\left(\frac{2\pi D e}{J_D(n)}\right)^{2J_D(n)} \nonumber =
\exp\!\left[-\left(\frac{2(D!)^{1/D}}{D}+o(1)\right)n^{1/D}\log n\right].
\label{eq:lambda_asymp}
\end{align}
}
\textbf{1) Sparse regime} $(p<1-r)$:
By Lemma~\ref{lem:max_load}, there exists a constant $C>0$ such that
$\CMcal{B}_s = \{n_{\max}\le C\}$ holds w.h.p. On $\CMcal{B}_s$, Lemma~\ref{lem:zf_sumrate} gives
\begin{align}
\mathbb{E}[R_\Sigma|\CMcal{B}]
&\lesssim M^p \log_2\!\left(1+\rho M\right)
\lesssim M^p\log M.
\end{align}
On $\CMcal{B}_s^c$, the deterministic bound above is also $O(M^p\log M)$. Therefore, from \eqref{eq:law_of_total}, it follows that
\begin{align}
\mathbb{E}[R_\Sigma] \lesssim M^p\log M.
\end{align}

\textbf{2) Critical regime} $(p=1-r)$:
In this regime, we have $K/B = M^{p+r-1} = 1$ and $K = B = M^{1-r}$. Based on the results of Lemma~\ref{lem:max_load}, we know that the maximum load is concentrated on the scale of $\frac{\log B}{\log \log B}$. Accordingly, we define the predicted maximum load as $n_B = \frac{\log B}{\log \log B}$ and the margin as $\epsilon_B = \frac{1}{\log \log B}$. Let $t_B = \lfloor(1-\epsilon_B)n_B\rfloor$ denote the threshold, which exactly corresponds to $k_\alpha$ in the proof of Lemma~\ref{lem:max_load} with $\alpha = 1-\epsilon_B$. To utilize the result of \eqref{eq:law_of_total}, we define the event $\CMcal{B}_c = \{n_{\max} \ge t_B\}$. We first show that the second term of the upper bound in \eqref{eq:law_of_total} converges to zero as $M \to \infty$ and then derive the result for the first term using Lemma~\ref{lem:zf_sumrate}.

To bound the probability $\mathbb{P}(\CMcal{B}_c^c)$, we introduce a Poissonization argument. To be specific, we assume that the total number of users follows a Poisson random variable $K_e \sim \mathrm{Pois}(K)$ where $\mathrm{Pois}(\cdot)$ denotes a Poisson random variable. When $K_e$ users are uniformly distributed across $B$ bins, the number of users in each bin $b=1,\dots,B$ becomes an i.i.d. random variable $n_b \sim \mathrm{Pois}(1)$. Consequently, the Poissonized probability $\mathbb{P}_\mathrm{Pois}(\CMcal{B}_c^c)$ factorizes as
\begin{align}
\mathbb{P}_\mathrm{Pois}(\CMcal{B}_c^c) = (1-p_B)^B \le \exp(-Bp_B),
\end{align}
where $p_B = \mathbb{P}(n_1 \ge t_B)$. Next, by using the lower bound $p_B \ge \mathbb{P}(n_1=t_B) = \frac{e^{-1}}{t_B!}$ and Stirling's formula to calculate the factorial term, we obtain
\begin{align}
\log p_B \ge -t_B\log t_B + O(t_B) = -(1-\epsilon_B)\log B + o(\log B).
\end{align}
This implies that $p_B \ge B^{-(1-\epsilon_B)+o(1)}$, which leads to $Bp_B \ge B^{\epsilon_B+o(1)}$. Thus, the Poissonized probability decays as
\begin{align}
\mathbb{P}_\mathrm{Pois}(\CMcal{B}_c^c) \le \exp\!\left(-B^{\epsilon_B+o(1)}\right). \label{eq:pois_upper}
\end{align}
By returning to the original model with a fixed $K$ through conditional probability, we have
\begin{align}
\mathbb{P}(\CMcal{B}_c^c) = \mathbb{P}_\mathrm{Pois}(\CMcal{B}_c^c \mid K_e=K) \le \frac{\mathbb{P}_\mathrm{Pois}(\CMcal{B}_c^c)}{\mathbb{P}(K_e=K)}.
\end{align}
Since $\mathbb{P}(K_e=K) = \Theta(K^{-1/2})$ via Stirling's formula, this value is negligible compared to the upper bound of $\mathbb{P}_\mathrm{Pois}(\CMcal{B}_c^c)$ in \eqref{eq:pois_upper}, which results in $\mathbb{P}(\CMcal{B}_c^c) \le \exp(-B^{\epsilon_B+o(1)})$. Therefore, the contribution of $\CMcal{B}_c^c$ to the total expectation vanishes asymptotically as
\begin{align}
K\log_2\!\left(1+\frac{\rho M}{K}\right)\mathbb{P}(\CMcal{B}_c^c) \to 0.
\end{align}

Conditioned on the event $\CMcal{B}_c$, the maximum load satisfies $n_{\max} \ge t_B$. Because the sum rate upper bound in Lemma~\ref{lem:zf_sumrate} is monotonically decreasing for sufficiently large $n$, {applying \eqref{eq:lambda_asymp}} we obtain
\begin{align}
R_\Sigma \lesssim K\log_2\!\left( 1+\rho M\,{e^{-(2+o(1))t_B\log t_B}} \right).
\end{align}
Analyzing the exponent reveals that
\begin{align}
{(2+o(1))\,t_B\log t_B = (2+o(1))\log B .}
\end{align}
From this, we derive the relation {$e^{-(2+o(1))t_B\log t_B} = B^{-2+o(1)}$}. By substituting $B=M^{1-r}$, the scale of the SNR term becomes
\begin{align}
\rho M\,{e^{-(2+o(1))t_B\log t_B}} = \rho M\,B^{-2+o(1)} = \rho M^{2r-1+o(1)}.
\end{align}
Thus, the conditional expectation is upper bounded as $\mathbb{E}[R_\Sigma \mid \CMcal{B}_c] \lesssim M^{1-r}\log_2(1+\rho M^{2r-1+o(1)})$. By substituting these results into \eqref{eq:law_of_total} and evaluating the asymptotic behavior according to the value of $r$, we arrive at the following conclusion
\begin{align}
\mathbb{E}[R_\Sigma] \lesssim \begin{cases} M^{r+o(1)}, & r<\frac{1}{2}, \\M^{1/2+o(1)}, & r=\frac{1}{2}, \\M^{1-r+o(1)}\log M, & r>\frac{1}{2}. \end{cases}
\end{align}

\textbf{3) Dense regime} $(p>1-r)$:
By the pigeonhole principle,
\begin{align}
n_{\max}
\ge \left\lceil \frac{K}{B}\right\rceil
= \left\lceil M^{p+r-1}\right\rceil.
\end{align}
Since $p+r-1>0$, the lower bound diverges. Applying Lemma~\ref{lem:zf_sumrate} {through \eqref{eq:lambda_asymp}},
\begin{align}
R_\Sigma
\lesssim M^p\log_2\!\left(1+\rho M\,{e^{-(2+o(1))n_{\max}\log n_{\max}}}\right)
\to 0,
\end{align}
and hence $\mathbb{E}[R_\Sigma]\to 0.$

{\section{Proof of Corollaries~\ref{cor:separation} and~\ref{cor:separation_upa}}
\label{ap:E}
Set $(D,\chi)=(2,1-r)$ for the ULA and $(D,\chi)=(3,1-2r)$ for the UPA.
Since $q=p-\chi+\varepsilon$, the joint space-Doppler domain contains $N_{\mathrm J}=M^{\chi+q}=M^{p+\varepsilon}$
resolution bins. Here, $N_{\mathrm J}$ governs the random user
geometry, whereas $ML$ is the dimension of each space-time channel
vector and therefore enters the beam energy and SINR. Although
$K/N_{\mathrm J}=M^{-\varepsilon}\to0$, the $K^2$ possible user
pairs mean that exceptional close pairs may still occur. The graph
construction therefore isolates this local crowding rather than
requiring the full Gram matrix to be diagonally dominant.

Normalize each spatial coordinate by its array resolution and the Doppler coordinate by $L$, and denote the resulting max-distance between users $i$ and $j$ by $\Delta_{ij}$, with wrap-around in Doppler. Connect users whenever $\Delta_{ij}\le\Delta_{\mathrm c}$, where
\begin{equation}
\Delta_{\mathrm c}=M^{\varepsilon/(2D)},
\label{eq:ach_cluster_radius}
\end{equation}
and call each connected component a cluster.

Let $\mathcal N_t$ denote the number of connected $t$-user groups. Since every such group contains a spanning tree, the standard tree-counting bound for random geometric graphs \cite{penrose:rgg:03} gives
\begin{equation}
\mathbb E[\mathcal N_t]
\lesssim
K\left(\frac{K\Delta_{\mathrm c}^D}{N_{\mathrm J}}\right)^{t-1}
=
M^{p-\varepsilon(t-1)/2}.
\label{eq:ach_cluster_count}
\end{equation}
Hence, with
\begin{equation}
n_0=\left\lceil\frac{2p}{\varepsilon}\right\rceil+1,
\end{equation}
every cluster contains at most $n_0$ users with probability $1-o(1)$.

Define the minimum separation scale
\begin{equation}
\Delta_{\min}=M^{-(p-3\varepsilon/4)/D}.
\label{eq:ach_min_scale}
\end{equation}
A union bound over all user pairs yields
\begin{equation}
\mathbb P\!\left\{
\min_{i\ne j}\Delta_{ij}<\Delta_{\min}
\right\}
\lesssim
\frac{K^2\Delta_{\min}^D}{N_{\mathrm J}}
=
M^{-\varepsilon/4}.
\label{eq:ach_min_distance}
\end{equation}
Let $\mathcal E_{\mathrm g}$ be the event on which the cluster size and
minimum separation properties above hold. On $\mathcal E_{\mathrm g}$,
distinct clusters are separated by more than $\Delta_{\mathrm c}$,
each cluster has diameter $O(n_0\Delta_{\mathrm c})$, and
$\mathbb P(\mathcal E_{\mathrm g})=1-o(1)$. These two properties are
complementary: bounded cluster size leaves only a fixed number of
local interpolation constraints, while $\Delta_{\min}$ limits their
conditioning cost to a fixed power of $M$. The separation
$\Delta_{\mathrm c}$ is then used to suppress inter-cluster responses.

For a cluster $\mathcal C$ and $i\in\mathcal C$, define the local amplification factor
\begin{equation}
\Upsilon_i
=
\prod_{j\in\mathcal C\setminus\{i\}}
\max\!\left\{
\frac{\Delta_{\mathrm c}}{\Delta_{ij}},1
\right\}.
\label{eq:ach_local_cost}
\end{equation}
The construction can be viewed as a tapered trigonometric Lagrange
interpolant: for user $i$, it imposes unit response at $i$ and zeros
at the other nodes of $\mathcal C$. The factor $\Upsilon_i$ measures
the amplification needed to impose these zeros when cluster members
are close. Since the taper is included in the interpolation
construction, the local constraints remain exact while the response
decays polynomially away from $\mathcal C$.

The localized Lagrange construction of \cite{kunis:laa:20}, combined with a fixed taper of sufficiently large order $N_{\mathrm{tap}}$, provides a local space-time precoding vector $\bar{\mathbf f}_i^{(0)}$ satisfying
\begin{equation}
\bar{\mathbf h}_j^{\mathsf H}\bar{\mathbf f}_i^{(0)}
=
\mathds{1}\{i=j\},
\qquad j\in\mathcal C,
\label{eq:ach_local_interp}
\end{equation}
\begin{equation}
\left|
\bar{\mathbf h}_j^{\mathsf H}\bar{\mathbf f}_i^{(0)}
\right|
\lesssim
\Upsilon_i(1+\Delta_{ij})^{-N_{\mathrm{tap}}},
\qquad j\notin\mathcal C,
\label{eq:ach_local_decay}
\end{equation}
and
\begin{equation}
ML\|\bar{\mathbf f}_i^{(0)}\|_2^2
\lesssim
\Upsilon_i^2.
\label{eq:ach_local_energy}
\end{equation}
On $\mathcal E_{\mathrm g}$,
$\Upsilon_i\le(\Delta_{\mathrm c}/\Delta_{\min})^{n_0-1}=M^{O(1)}$.
Thus, a sufficiently large $N_{\mathrm{tap}}$, fixed independently of
$M$, can dominate this polynomial amplification beyond
$\Delta_{\mathrm c}$, making every inter-cluster leakage coefficient
at most $M^{-(p+2)}$.

Collecting the local vectors in $\bar{\mathbf F}_0$ gives
\begin{equation}
\bar{\mathbf H}^{\mathsf H}\bar{\mathbf F}_0
=
\mathbf I_K+\boldsymbol\Phi.
\label{eq:ach_local_matrix}
\end{equation}
Since $\bar{\mathbf F}_0$ is exact within every cluster,
$\boldsymbol\Phi$ contains only inter-cluster responses. Each such
entry is at most $M^{-(p+2)}$, and at most $K=M^p$ entries occur in
any row or column. Hence,
\begin{equation}
\|\boldsymbol\Phi\|_2
\le
\sqrt{\|\boldsymbol\Phi\|_1\|\boldsymbol\Phi\|_\infty}
=
O(M^{-2}).
\label{eq:ach_leakage}
\end{equation}
Thus, the corrected right inverse
\begin{equation}
\bar{\mathbf F}_{\mathrm{ri}}
=
\bar{\mathbf F}_0(\mathbf I_K+\boldsymbol\Phi)^{-1}
\label{eq:ach_corrected_beam}
\end{equation}
satisfies $\bar{\mathbf H}^{\mathsf H}\bar{\mathbf F}_{\mathrm{ri}}=\mathbf I_K$ and
$\|\bar{\mathbf F}_{\mathrm{ri}}\|_F^2\le(1+o(1))\|\bar{\mathbf F}_0\|_F^2$.

It remains to show that rare close configurations do not dominate
the total beam power. Fix an anchor user and sum over the possible
additional cluster members. A member at distance
$x<\Delta_{\mathrm c}$ contributes the squared amplification factor
$(\Delta_{\mathrm c}/x)^2$, while the probability of lying in the
corresponding $D$-dimensional shell is
$O(x^{D-1}dx/N_{\mathrm J})$. At larger distances, the amplification
factor is one, producing the volume term below. Therefore, the
expected multiplicative energy factor associated with one additional
cluster member satisfies
\begin{equation}
\mathcal I_M
\lesssim
\frac{K}{N_{\mathrm J}}
\left[
\Delta_{\mathrm c}^2
\int_{\Delta_{\min}}^{\Delta_{\mathrm c}}x^{D-3}dx
+
\Delta_{\mathrm c}^D
\right]
=
M^{-\varepsilon/2+o(1)}.
\label{eq:ach_energy_factor}
\end{equation}
The lower-limit singularity is logarithmic only for $D=2$ and is
absent for $D=3$. Thus, $\mathcal I_M\to0$. Since the cluster size is
fixed, every additional member contributes another factor
$\mathcal I_M$; multiuser clusters are therefore lower order, while
singleton clusters determine the $M^{p+o(1)}$ total power order.
Accordingly,
\begin{equation}
\mathbb E\!\left[
ML\|\bar{\mathbf F}_0\|_F^2\mathbf 1_{\mathcal E_{\mathrm g}}
\right]
\lesssim
K\sum_{m=1}^{n_0}m\mathcal I_M^{m-1}
=
M^{p+o(1)}.
\label{eq:ach_power_mean}
\end{equation}
Markov's inequality and the leakage correction imply
\begin{equation}
ML\|\bar{\mathbf F}_{\mathrm{ri}}\|_F^2
\le
M^{p+\varepsilon/2+o(1)}
\label{eq:ach_power_whp}
\end{equation}
with probability $1-o(1)$.

The final step uses the variational characterization of minimum norm
ZF: it cannot use more energy than any feasible right inverse.
Therefore, for any right inverse satisfying
$\bar{\mathbf H}^{\mathsf H}\bar{\mathbf F}_{\mathrm{ri}}=\mathbf I_K$,
\begin{equation}
\operatorname{tr}(\bar{\mathbf G}^{-1})
\le
ML\|\bar{\mathbf F}_{\mathrm{ri}}\|_F^2.
\label{eq:ach_inverse_gram_bound}
\end{equation}
Consequently,
\begin{equation}
\overline{\mathrm{SINR}}
\ge
\rho M^{1-\chi+\varepsilon/2-o(1)}
\label{eq:ach_sinr}
\end{equation}
with probability $1-o(1)$. Since $K/L=M^{\chi-\varepsilon}$ and
the SINR in \eqref{eq:ach_sinr} grows polynomially with $M$, the sum
rate satisfies
\begin{equation}
\bar R_\Sigma
=
\Omega\!\left(
M^{\chi-\varepsilon}\log M
\right)
\qquad \text{w.h.p.}
\label{eq:ach_rate_general}
\end{equation}
Since \eqref{eq:ach_rate_general} holds with probability $1-o(1)$
and $\bar R_\Sigma\ge0$,
\begin{equation}
\mathbb E[\bar R_\Sigma]
=
\Omega\!\left(
M^{\chi-\varepsilon}\log M
\right).
\label{eq:ach_rate_expectation}
\end{equation}

Finally, substituting $\chi=1-r$ for the ULA and
$\chi=1-2r$ for the UPA, and combining these lower bounds with
Theorems~2 and~4, respectively, yields
\begin{equation}
\mathbb E[\bar R_\Sigma]
=
\begin{cases}
\Theta\!\left(
M^{1-r-\varepsilon}\log M
\right), & \text{ULA},\\[5pt]
\Theta\!\left(
M^{1-2r-\varepsilon}\log M
\right), & \text{UPA}.
\end{cases}
\label{eq:ach_rate_cases}
\end{equation}}

\bibliographystyle{IEEEtran}
\bibliography{ref_stab}

\end{document}